\newtheorem{theorem}{Theorem}
\newtheorem{assumption}{Assumption}
\newtheorem{definition}{Definition}
\newcommand{\Complex}{\mathbb{C}}
\newcommand{\E}{\mathbb{E}}
\newcommand{\Cov}{\boldsymbol{\Sigma}}
\newcommand{\I}{\mathbf{I}}
\newcommand{\SampCov}{\mathbf{S}}
\newcommand{\banded}{\mathbf{B}}
\newcommand{\lowerbound}{{\underline{F}}}
\newcommand{\upperbound}{{\overline{F}}}
\newcommand{\diag}{{\text{diag}}}
\newcommand{\matrixdimension}{{p}}
\newcommand{\const}{\mathrm{const.}}
\newcommand{\VEC}{\mathrm{vec}}
\newcommand{\argmin}{\mathrm{argmin}}
\newcommand{\conv}{\mathrm{conv}}
\def\half{\frac{1}{2}}
\def\Al{\mathbf{A}^{(\ell)}}
\def\gl{{g_\ell}}
\def\tr{\mathrm{tr}}
\def\E{\mathbb{E}}
\def\S{\mathbf S}
\def\hatP{\hat{\mathbf{B}}}
\newacronym{stap}{STAP}{Space-Time Adaptive Processing}
\newacronym{scm}{SCM}{Sample Covariance Matrix}
\newacronym{scnr}{SCNR}{Signal--to--Clutter--plus--Noise--Ratio}
\newacronym{mp}{MP}{Marchenko--Pastur}
\newacronym{mle}{MLE}{Maximum Likelihood Estimator}
\newacronym{rmt}{RMT}{Random Matrix Theory}
\newacronym{svd}{SVD}{Singular Value Decomposition}
\newacronym{bcd}{BCD}{Block Co-ordinate Descent}
\newacronym{tobasco}{TABASCO}{TApered or BAnded Shrinkage COvariance}
\newacronym{icm}{ICM}{Internal Clutter Motion}
\newacronym{kkt}{KKT}{Karush-Kuhn-Tucker}
\newacronym{crb}{CRB}{Cramer-Rao Bound}
\def\BibTeX{{\rm B\kern-.05em{\sc i\kern-.025em b}\kern-.08em
    T\kern-.1667em\lower.7ex\hbox{E}\kern-.125emX}}
\begin{document}

\title{Approximate MLE of High-Dimensional STAP Covariance Matrices with Banded \& Spiked Structure -- A Convex Relaxation Approach} 

\author{Shashwat Jain}
\member{Student Member, IEEE}
\affil{Cornell University, Ithaca, NY, USA} 

\author{Vikram Krishnamurthy}
\member{Fellow, IEEE}
\affil{Cornell University, Ithaca, NY, USA} 

\author{Muralidhar Rangaswamy}
\member{Fellow, IEEE}
\affil{Air Force Research Laboratory, Wright Patterson AFB, OH, USA}

\author{Sandeep Gogineni}
\member{Senior Member, IEEE}
\affil{Information Systems Laboratories Inc., Dayton, OH, USA}

\author{Bosung Kang}
\member{Member, IEEE}
\affil{University of Dayton Research Institute, Dayton, OH, USA}

\author{Sean M. O'Rourke}
\member{Member, IEEE}
\affil{U.S. Army Combat Capabilities Development Command, Army Research Laboratory, Adelphi, MD, USA}

%% \author{FOURTH D. AUTHOR}
%% \affil{University of Colorado, Colorado, USA}

\receiveddate{ }
%% \accepteddate{XXXXX XX XXXX}
%% \publisheddate{XXXXX XX XXXX}

%\corresp{Corresponding Author: Shashwat Jain}

\authoraddress{Shashwat Jain: sj474@cornell.edu, Vikram Krishnamurthy: vikramk@cornell.edu, Muralidhar Rangaswamy: muralidhar.rangaswamy@us.af.mil, Sandeep Gogineni: sgogineni@islinc.com, Bosung Kang: bosung.kang@udri.udayton.edu and Sean M. O'Rourke: sean.m.orourke18.civ@army.mil.}

\editor{}
\supplementary{For reproducibility of the results, the code is available at 
\hyperlink{https://github.com/sjain474/BandedPlusSpiked.git}{\texttt{GitHub}}. 
A shorter preliminary version of this paper titled ``MLE of  Banded, Spiked, High--Dimensional Clutter Covariance--A Convex Relaxation Approach" is to appear in the proceedings of the International Radar Conference 2025.\\ 
This research was funded by Army Research Office grant W911NF-24-1-0083 and NSF grant CCF-2312198. Distribution Statement A. Approved for public release: distribution is unlimited.}

\markboth{Jain et. al.}{Banded Plus Spiked MLE}
\maketitle

\begin{abstract}
Estimating the clutter-plus-noise covariance matrix in high-dimensional \gls*{stap} is challenging in the presence of Internal Clutter Motion (ICM) and a high noise floor. The problem becomes more difficult in low-sample regimes, where the Sample Covariance Matrix (SCM) becomes ill-conditioned.  To capture the ICM and high noise floor, we model the covariance matrix using a ``Banded+Spiked'' structure. Since the Maximum Likelihood Estimation (MLE) for this model is non-convex, we propose a convex relaxation which is formulated as a Frobenius norm minimization with non-smooth convex constraints enforcing banded sparsity. This relaxation serves as a provable upper bound for the non-convex likelihood maximization and extends to cases where the covariance matrix dimension exceeds the number of samples. We derive a variational inequality-based bound to assess its quality. We introduce a novel algorithm to jointly estimate the banded clutter covariance and noise power. Additionally, we establish conditions ensuring the estimated covariance matrix remains positive definite and the bandsize is accurately recovered. Numerical results using the high-fidelity RFView radar simulation environment demonstrate that our algorithm achieves a higher Signal--to--Clutter--plus--Noise Ratio (SCNR) than state-of-the-art methods, including TABASCO, Spiked Covariance Stein Shrinkage, and Diagonal Loading, particularly when the covariance matrix dimension exceeds the number of samples.
\end{abstract}

\begin{IEEEkeywords}
Clutter--Plus--Noise Covariance Estimation, Banded Matrix, Spiked Covariance Matrix, MLE, Convex Relaxation
\end{IEEEkeywords}

\section{INTRODUCTION}
In high-dimensional Space-Time Adaptive Processing (\gls*{stap}), estimating the clutter-plus-noise covariance matrix is particularly challenging when the number of samples is limited relative to the matrix dimension. Traditional methods, such as the \gls*{scm}, often become ill-conditioned and exhibit high estimation error in these low-sample regimes, leading to significantly reduced \gls*{scnr} and degraded beamforming and target detection performance. The presence of \gls*{icm} with a high noise floor further exacerbates the estimation. Our approach models the clutter--plus--noise covariance matrix as a banded structure, where most eigenvalues lie below the noise power, resulting in a ``Banded+Spiked'' model. To handle the non-convex nature of the log-likelihood function used to compute the \gls*{mle}, we introduce a convex relaxation and propose an estimation algorithm. The proposed algorithm yields higher \gls*{scnr} under limited sample conditions, outperforming the existing state--of--the--art methods. 
\subsection*{Why Banded+Spiked Covariance Structure?}
The ``Banded" structure in a clutter covariance matrix models local spatio-temporal dependencies by concentrating non-zero elements near the main diagonal within a bandsize of $L$ sub-diagonals, leaving distant off-diagonal elements significantly smaller in comparison.  This phenomenon is documented clearly in \cite{ward1998space} when \gls*{icm} is present.  This structure is particularly important for airborne radar, where successive radar pulses are spatio-temporally correlated over short intervals but not over long ones, creating a natural sparsity pattern. However, while the banded structure captures sparsity, it does not account for the noise-dominated distribution of singular values. This is where the ``Spiked" component is essential to model the clutter plus noise covariance matrix. In \gls*{stap} applications, where thermal noise dominates, most singular values of the \gls*{scm} are influenced by the noise floor, and follow the \gls*{mp} distribution in high dimensions. By leveraging this ``Banded+Spiked" model, we capture both the sparsity and the noise properties, for accurate clutter--plus--noise covariance estimation. This model is suitable when clutter--plus--noise covariance matrix has a high noise floor in the presence of \gls*{icm}.\\
\subsection*{Related Works}
Covariance matrix estimation in data-deficient scenarios has been a significant area of research in radar signal processing, particularly for high-dimensional settings where the \gls*{scm} becomes ill-conditioned and unreliable with limited samples \cite{ReedMallet}. Traditional methods, such as diagonal loading \cite{abramovich1981analysis, abramovich1981controlled, carlson1988covariance,Ward1994} and factored space-time methods \cite{dipietro1992extended}, have been used to address the ill-conditioning of the \gls*{scm}. Additional adaptive methods include Principal Components Inverse \cite{kirsteins1994adaptive}, the Multistage Wiener Filter \cite{goldstein1998multistage}, the Parametric Adaptive Matched Filter \cite{roman2000parametric}, and the EigenCanceler \cite{haimovich1996eigencanceler}, along with data-independent approaches such as JDL-GLR \cite{wang1994adaptive}. \gls*{rmt} provides theoretical insights into \gls*{scm} behavior, enabling shrinkage estimators to stabilize covariance estimates in high-dimensional, data-limited settings \cite{wainwright2019high, vershynin2018high, bai2010spectral, wang2017asymptotics}. %Popular shrinkage methods include the Ledoit-Wolf estimator \cite{ledoit2004well}, regularized PCA \cite{bun2016rotational}, ridge and lasso shrinkage \cite{yuan2020improved}, and regularized M-estimators \cite{couillet2014robust}. These methods have found applications in wireless communications \cite{tulino2004random}, direction of arrival estimation \cite{couillet2013joint}, and financial portfolio optimization \cite{ledoit2022quadratic}.

In the absence of \gls*{icm}, clutter covariance matrices frequently exhibit a low-rank structure, and several estimation methods assume the rank-deficient clutter covariance matrix~\cite{sen2015low, spClut1, kang2016expected, abrahamsson2007enhanced}. Under ideal conditions, Brennan’s rule~\cite{Brennan1973} gives the rank of the clutter covariance matrix, but it often fails under real-world complexities, such as \gls*{icm} and mutual coupling between antenna array elements. High-fidelity simulations using radar software (RFView~\cite{RFView}) further demonstrate that Brennan’s rule may be unreliable in complex scenarios~\cite{guerci2016new} where \gls*{icm} is prevalent.

Prior knowledge of radar data distribution is beneficial for modeling the clutter covariance matrix. Given the correlation between the neighboring pulses, a Toeplitz structure can be used to represent this correlation effectively \cite{Miller1987, Fuhramm1991}. Methods for estimating Toeplitz matrices were developed in \cite{Hongbin1999}. In \cite{Kang2015}, a low-rank constraint was further introduced to refine the clutter covariance model. However, the Toeplitz structure is only suitable for modeling correlations between temporally proximate pulses; or to account for the spatial correlation across a uniform line antenna array with equally spaced elements. Pulses that are farther apart in time may exhibit different returns, necessitating a more general banded structure for an accurate representation.

Tapered covariance matrices gradually reduce off-diagonal correlations, avoiding the sharp cutoff characteristic of the strictly banded matrices. The approach applies a tapering function \cite{WENDLAND1998258} to smoothly transition distant elements toward zero, preserving short-range correlations where relevant. The \gls*{tobasco} algorithm in \cite{ollila2021tobasco} provides a framework for covariance tapering which is useful in high-dimensional settings with short range dependencies. A survey of different types of tapers is included in \cite{ollila2021tobasco}. Banded covariance matrix models \cite{bickel2008regularization, rothman2010sparse} capture localized dependencies in high-dimensional radar data \cite{Mailloux1995, Guerci1999, Zatman2000}. The paper \cite{bein2015convex} develops a convex optimization framework for estimating banded covariance matrices, which offers a structured approach to reduce estimation error by exploiting local dependencies.
\subsection*{Contribution and Organization}

\begin{enumerate}
    \item In Sec.\,\ref{Sec:Model}, we propose the ``Banded+Spiked" structure for the clutter--plus--noise covariance matrix. 
    \item We demonstrate that the convex relaxation for maximizing the likelihood function for ``Banded+Spiked'' matrices can be formulated as minimization of a Frobenius norm in Theorem~\ref{Thm:UpperBound}. The proposed convex relaxation forms a provable upper bound for the likelihood function. 
    \item The convex relaxation is for both cases: when the number of samples is less than the dimension of the matrix and when the number of samples exceeds the matrix dimension.
    \item We propose a variational inequality bound for the tightness of the proposed convex relaxation in Theorem~\ref{Thm:EpsilonDerive}.
    \item In Sec.\,\ref{Sec:Algo}, we propose Algorithm \ref{alg:hat}, which jointly estimates noise power and a banded covariance matrix.
    \item We establish the conditions for the positive definiteness in Theorem~\ref{Thm:SigmaPositive} and the bandsize recovery of the estimated clutter--plus--noise covariance matrix. 
    \item In Sec.\,\ref{Sec:Num}, we show that the proposed algorithm yields a higher \gls*{scnr} compared to \gls*{tobasco} in \cite{ollila2021tobasco}; our previous work \cite{Jain2023}, which we call Spiked Covariance algorithm; and Diagonal loading  \cite{abramovich1981analysis} for an RFView\footnote{RFView is a high-fidelity radar simulation software which models complex airborne radar scenarios.} simulated environment.
\end{enumerate}

This paper extends the convex relaxation proposed in the conference paper \cite{Jain2025bandedMLE} to the regime where the covariance matrix dimension is larger than the number of available samples. Additionally, we derive bounds based on variational inequalities to quantify the quality of the relaxation and establish conditions that guarantee both the positive definiteness of the estimated clutter-plus-noise covariance matrix and the recovery of its bandsize.

In summary, we propose a convex relaxation, formulated as a Frobenius norm minimization, for the non-convex log-likelihood optimization of the ``Banded+Spiked" covariance matrix. This relaxation provides a provable upper bound for the likelihood maximization problem. Our method integrates the banded estimation framework from \cite{bein2015convex} to model the banded radar clutter sparsity under \gls*{icm} and incorporates \cite{Jain2023} for efficient noise power estimation. Unlike \cite{Jain2023}, which does not account for structured sparsity, our approach effectively captures both the banded structure and noise characteristics essential for radar applications.

\section{Convex Relaxation for ``Banded+Spiked" \gls*{mle}}
\label{Sec:Model}
In Sec.\,\ref{Sec:Model}-\ref{Sec:ModelA}, we first introduce the ``Banded+Spiked" structure to model the clutter--plus--noise covariance matrix. In Sec.\,\ref{Sec:Model}-\ref{Sec:ModelB}, we propose a convex relaxation for the negative log-likelihood function for estimating the ``Banded+Spiked" covariance matrix in Theorem~\ref{Thm:UpperBound}. We derive the convex relaxation for both cases: when the number of samples is less than the dimension of the matrix and when the number of samples exceeds the matrix dimension.
We outline the constraints that enforce the ``Banded+Spiked" structure, capturing both the localized dependencies and noise power dominated eigenvalues of the clutter--plus--noise covariance matrix. In Sec.\,\ref{Sec:Model}-\ref{Sec:VariationalBound} we state the bound of the norm difference between the argmins of the convex relaxation and non-convex negative log-likelihood function using the variational inequality bound in Theorem~\ref{Thm:EpsilonDerive}.
\subsection{``Banded+Spiked'' Covariance Model}
\label{Sec:ModelA}
The STAP datacube is a tensor of size $N_c \times N_s \times N_r$. Here, $N_c$ represents the number of channels, corresponding to the angular component of the datacube. $N_s$ denotes the number of pulses, each with the same waveform, capturing the slow time-scale or Doppler domain of the datacube. Lastly, $N_r$ denotes the number of range bins, which corresponds to the fast-time dimension within the datacube.

We assume that the waveform $\mathbf{z}\in \Complex^{P}$ is fixed for all pulses. For a given range gate $k_r$ and a single channel, we have the following equation for a given received pulse $x^{l}(k_r)$ indexed by $l$ and pulse $z(k)$ and noise $n(k)$, $1\leq k \leq P$:
\begin{equation}
\label{eq:Conv}
    x^{l}(k_r)=(\mathbf{h}_c^{l}\circledast \mathbf{z})(k_r)+n^{l}(k_r),
\end{equation}
where $\mathbf{h}_c^{l}\circledast \mathbf{z}$ is the convolution of the clutter impulse response $\mathbf{h}_c^{l}\in \Complex^{N}$ with the waveform $\mathbf{z}$, evaluated at the $k_r^{\text{th}}$ range bin. The receiver noise $n^{l}(k_r)\sim\mathcal{N}(0,\sigma^2)$ with variance $\sigma^2$ is evaluated at $k_r$ and is assumed to be independent for all elements in the data cube. We use the notation $\mathbf{h}_{c}^{l,k_r}$ to represent the elements of the vector $\mathbf{h}_{c}^{l}$ such that $\langle \mathbf{h}_{c}^{l,k_r}, \mathbf{z}\rangle$=$(\mathbf{h}_c^{l}\circledast \mathbf{z})(k_r)$, where $\langle\cdot,\,\cdot\rangle$ is the inner product. We stack all the responses in slow time to obtain a vector $\mathbf{x}\in\Complex^{N_s}$ for a given channel and a range bin $k_r$,
\begin{equation*}
    \mathbf{x}=\underbrace{\begin{bmatrix}
        \dots&\mathbf{h}_{c}^{1,k_r}&\dots\\
        \dots&\mathbf{h}_{c}^{2,k_r}&\dots\\
        \vdots&\vdots&\vdots\\
        \dots&\mathbf{h}_{c}^{N_s,k_r}&\dots
    \end{bmatrix}}_{\mathbf{H}_{k_r}}\mathbf{z}+\mathbf{n},
\end{equation*}
which can be represented as
\begin{equation}
    \mathbf{x}=\underbrace{(\mathbf{z}^{T}\otimes\mathbf{I})}_{\mathbf{Z}}\VEC(\mathbf{H}_{k_r})+\mathbf{n},
\end{equation}
where $\otimes$ denotes the Kronecker product, and $\VEC(\cdot)$ vectorizes the matrix one row after another. The covariance matrix of $\mathbf{x}$ is:
\begin{equation}
    \boldsymbol{\Sigma}=\mathbf{Z}\,\E[\VEC(\mathbf{H_{k_r}})\VEC(\mathbf{H}_{k_r})^H]\,\mathbf{Z}^{H}+\sigma^2\mathbf{I}.
\end{equation}
We assume that $\mathbf{h}_{c}^{i}$ and $\mathbf{h}_{c}^{j}$ are weakly 
 correlated, i.e., $\|\E[\mathbf{h}_c^{i}{\mathbf{h}_c^{j}}^{H}]\|_{\mathrm{op}}=\|\mathbf{R}_{c}^{ij}\|_{\mathrm{op}}< M_{t},\, |i-j|>L_{t},\,1\leq i,j\leq N_s $, where $\|\cdot\|_{\mathrm{op}}$ is the matrix operator norm.  $M_t$ is significantly smaller than the elements of the covariance matrix within the bandsize $L_{t}$. This condition is encountered when \gls*{icm} is prevalent as documented in \cite{ward1998space}. We also assume that $\E[\mathbf{h}_c^{l}]=\mathbf{0}$, for all pulses. This assumption imposes a banded structure over the temporal clutter--plus--noise covariance matrix which is defined as:
\begin{equation*}
   \boldsymbol{\Sigma}_{N_s\times N_s}=\mathbf{B}_{11}+\sigma^2\mathbf{I},
\end{equation*}
where $\mathbf{B}_{11}$ is a banded clutter covariance matrix for a single channel.
\begin{comment}
\begin{figure*}[h!]
\begin{align}
\label{eq:BandedDef}
    \boldsymbol{\Sigma}_{N_s\times N_s}=\underbrace{\mathbf{Z}\begin{bmatrix}
        \mathbf{R}_c^{11}&\mathbf{R}_c^{12}&\mathbf{0}_{k\times k}&\mathbf{0}_{k\times k}&\cdots&\mathbf{0}_{k\times k}\\
        \mathbf{R}_{c}^{21}&\mathbf{R_{c}^{22}}&\mathbf{R}_{c}^{23}&\mathbf{0}_{k\times k}&\cdots&\mathbf{0}_{k \times k}\\
        \mathbf{0}_{k\times k}&\mathbf{R}_{c}^{32}&\mathbf{R}_{c}^{33}&\mathbf{R}_{c}^{34}&\cdots&\mathbf{0}_{k\times k}\\
        \vdots&\vdots&\vdots&\vdots&\vdots&\vdots\\
        \mathbf{0}_{k\times k}&\cdots&\cdots&\cdots&\mathbf{R}_{c}^{N_s N_s-1}&\mathbf{R}_{c}^{N_s N_s}
    \end{bmatrix}\mathbf{Z}^{H}}_{\mathbf{B}}+\sigma^2\mathbf{I}.\\
\hline\nonumber
\end{align}
\end{figure*}
Note that in \eqref{eq:BandedDef}, $\mathbf{R}_{c}^{ij}={\mathbf{R}_{c}^{ji}}^{H}$ where $\E[\mathbf{h}_c^{i}{\mathbf{h}_c^{j}}^{H}]=\mathbf{R}_{c}^{ij}$.
\end{comment}
We further assume that nearby channels have weak spatial correlations when $1\leq q \leq N_c$ channels are concatenated, resulting in a covariance matrix of dimension $\matrixdimension = q N_s$.

\begin{equation*}
    \Cov_{\matrixdimension\times\matrixdimension}=\underbrace{\begin{bmatrix}
        \banded_{11}&\banded_{12}&\cdots&\banded_{1q}\\
        \banded_{12}&\banded_{22}&\cdots&\banded_{2q}\\
        \vdots&\cdots&\ddots&\vdots\\
        \banded_{q1}&\banded_{q2}&\cdots&\banded_{qq}
    \end{bmatrix}}_{\banded}+\sigma^{2}\I_{\matrixdimension\times\matrixdimension}
\end{equation*}
Under weak spatial correlations the $\|\banded_{ij}\|_{\mathrm{op}}< M,\,|i-j|>L$, where $M$ is significantly smaller in magnitude compared to the elements of covariance matrix within the bandsize $L$, imposing a banded structure. Furthermore, we assume that only a small fraction of the power of the clutter return signal lies above the noise floor, enforcing a spiked covariance model for the clutter-plus-noise covariance matrix. This assumption allows us to model the matrix with a few dominant eigenvalues representing clutter components, while the majority of the eigenvalues reflect the noise floor. The clutter--plus--noise covariance matrix is expressed as:
\begin{equation}
\label{eq:BandedplusEye}
    \Cov=\banded+\sigma^2\I,
\end{equation}
where $\banded$ is the banded spatio-temporal clutter covariance matrix and $\sigma^2$ is the noise power. RFview simulations in Fig.\,\ref{fig:BandedplusSpiked} motivate us to use the ``Banded+Spiked'' covariance matrix assumption.\footnote{Ideally, when concatenating multiple channels, a Block Banded model is more appropriate. However, if the inter-channel correlations are negligibly small, the banded assumption remains a reasonable approximation. We leave the exploration of the Block Banded model for future work.}.
\begin{figure}[htbp]
    \centering
    
    % Subfigure 1
    \begin{subfigure}{0.5\textwidth}
        \includegraphics[width=\linewidth]{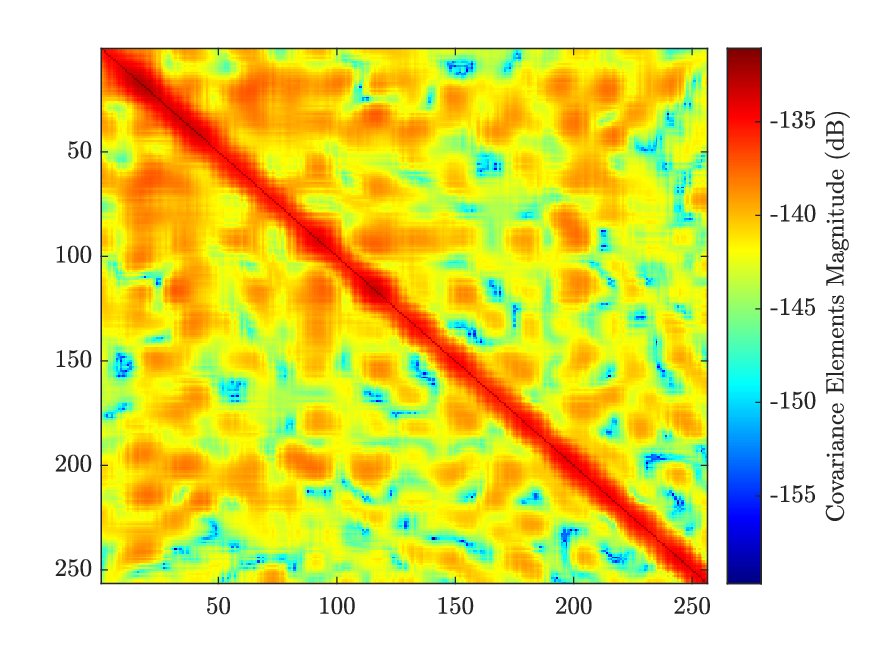}
        \caption{The clutter--plus--noise covariance matrix exhibits elements of higher magnitude concentrated near the diagonal, indicating its suitability for approximation using a banded matrix structure.}
        \label{fig:RFViewDoppler}
    \end{subfigure}
    \hfill
    % Subfigure 2
    \begin{subfigure}{0.5\textwidth}
        \includegraphics[width=\linewidth]{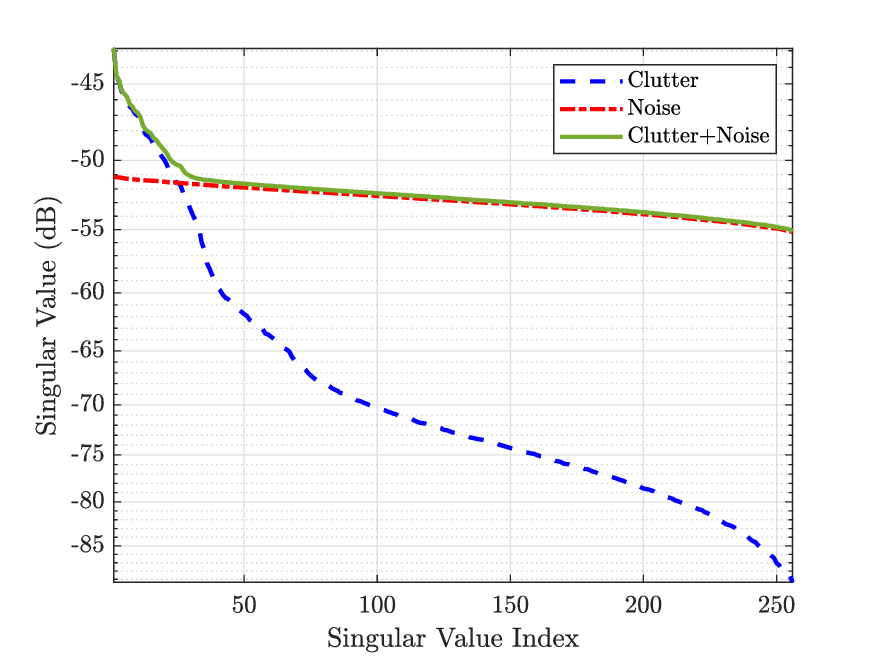}
        \caption{Only a small fraction of the singular values of the clutter covariance matrix (blue curve) exceed the noise floor (red curve). The singular values of the clutter--plus--noise covariance matrix (green curve) exhibits a spiked covariance model.}
        \label{fig:RFViewAzimuth}
    \end{subfigure}

    \caption{Clutter--plus--noise covariance matrix for $p = 256$ ($q = 4$ channels at $N_{s} = 64$ pulses/channel) simulated using RFView. }
    \label{fig:BandedplusSpiked}
\end{figure}
\subsection{Convex Relaxation for the Log-Likelihood Function}
\label{Sec:ModelB}
To estimate the clutter plus noise covariance matrix we use \gls*{mle}. We assume that the receive signal $\mathbf{x}\sim\mathcal{N}(\mathbf{0},\boldsymbol{\Sigma}_{\matrixdimension\times\matrixdimension})$. We use $K$ i.i.d. copies of $\mathbf{x}$ to compute the \gls*{scm},
\begin{equation}
\label{eq:SCM}
    \mathbf{S}=\frac{1}{K}\sum_{i=1}^{K}\mathbf{x}_i\mathbf{x}_i^{H}.
\end{equation}
The \gls*{scm} follows the complex Wishart distribution with $K$ degrees of freedom, when $p\leq K$:
\begin{equation}
    \label{eq:MLE}
    f(\SampCov; \Cov)= \frac{\det(\SampCov)^{K-\matrixdimension}}{\det(\Cov)^{K}\mathcal{C}\tilde{\Gamma}_{\matrixdimension}(K)}\exp(-K\tr(\Cov^{-1}\mathbf{S})),
\end{equation}
for a given $\Cov$, $\mathcal{C}\tilde{\Gamma}_{\matrixdimension}(K)$ is the complex multivariate gamma function. When $\matrixdimension>K$, \gls*{scm} follows the Complex Singular Wishart distribution \cite{RATNARAJAH2005399} given by,
\begin{equation}
    \label{eq:MLESingularWishart}
    f_{\mathrm{S}}(\SampCov;\Cov)=\frac{\pi^{K(K-\matrixdimension)}\det(\boldsymbol{\Lambda})^{K-\matrixdimension}}{\det(\Cov)^{K}\mathcal{C}\tilde{\Gamma}_{K}(K)}\exp(-K\tr(\Cov^{-1}\SampCov)),
\end{equation}
where $\boldsymbol{\Lambda}=\mathrm{diag}(\lambda_{1}(\SampCov),\,\cdots,\,\lambda_{K}(\SampCov))$. The negative log-likelihood functions for the case $\matrixdimension>K$ and $\matrixdimension\leq K$ are denoted as $F_{\mathrm{S}}(\Cov;\SampCov)$ and $F(\Cov;\SampCov)$, respectively and are expressed as:
\begin{equation}
\label{eq:LogLikelihood}
F(\SampCov;\,\Cov)=\tr(\Cov^{-1}\SampCov)+\log\det(\Cov\SampCov^{-1})-\matrixdimension+\mathrm{const.,}
\end{equation}
and
\begin{equation}
\label{eq:LogLikelihoodSing}
F_{\mathrm{S}}(\SampCov;\,\Cov)=\tr(\Cov^{-1}\SampCov)+\log\det(\Cov)-\log\det(\boldsymbol{\Lambda})+\mathrm{const.}
\end{equation}
where $\SampCov$, $\Cov$ and $\boldsymbol{\Lambda}$ are defined in \eqref{eq:MLE} and \eqref{eq:MLESingularWishart}.
The optimization problem for estimating the clutter covariance matrix involves minimizing the negative log-likelihood functions in \eqref{eq:LogLikelihood} and \eqref{eq:LogLikelihoodSing}, with respect to $\Cov$, where $\Cov=\mathbf{B}+\sigma^2\I$, and $\mathbf{B}$ is the banded clutter covariance matrix. %The objective function can be equivalently written as the Stein loss function given by, 
%$\mathcal F(\SampCov; \Cov)=\tr(\Cov^{-1}\SampCov)-\log\det(\Cov^{-1}\SampCov)-\matrixdimension,$
 The optimization problem is as follows:
\begin{align}
    \label{eq:Optim}
    \text{\textbf{Non-convex:}} &\quad \underset{\Cov\in \Complex^{\matrixdimension\times \matrixdimension}}{\text{Minimize}} \quad F(\SampCov; \Cov)\,\mathrm{or}\,F_{\mathrm{S}}(\SampCov; \Cov) \\
    &\quad\text{s.t.} \quad \Cov = \mathbf{B} + \sigma^2\I \nonumber \\
    &\quad\quad \sigma^2 \geq 0 \nonumber \\
    &\quad\quad \mathbf{B} \in \mathcal{B}_b \cap \mathcal{S}. \nonumber
\end{align}
Here, $\SampCov$ is the \gls*{scm}, and $\banded$ is the banded clutter covariance matrix and the set $\mathcal{B}_{b} := \{\banded : \banded = \banded^{H}, \banded \succ \mathbf{0}, \exists\,L \,\,\text{s.t.}\,\, |\banded_{jk}| < M,\,\,\forall\,\, |j-k| \geq L,\, 1 \leq j,k \leq \matrixdimension\}$ is a convex set with interior point \cite{bein2015convex}. The value $M$ is significantly smaller than the elements of the covariance matrix inside the bandsize $L$. The set does not impose a hard zero constraint on any element outside the band size \( L \). The set $\mathcal{S} := \{\mathbf{M} : \mathbf{M} = \mathbf{M}^{H}, \mathbf{M} \succeq \mathbf{0}, \exists\, r \ll \matrixdimension \,\,\text{s.t.}\,\, \lambda_i(\mathbf{M}) \ll \sigma^2,\, r+1 \leq i \leq \matrixdimension\}$ represents the set of spiked covariance matrices, where only $r$ eigenvalues $\lambda_i(\mathbf{M})$ exceed the noise power $\sigma^2$. The parameter $r$ reflects the index where the eigenvalues of the clutter covariance matrix are higher than the noise floor. We assume that the parameters $L$ and $r$ are unknown. 

The objective function in \eqref{eq:Optim} is non-convex with respect to $\Cov$. Moreover, while the set $\mathcal{B}_{b}$ is convex with respect to $\Cov$, it becomes non-convex with respect to $\Cov^{-1}$. On the other hand, the set of spiked covariance matrices $\mathcal{S}$ is non-convex with respect to both $\Cov$ and $\Cov^{-1}$. Typically, enforcing the constraints in $\mathcal{S}$ requires a convex relaxation. However, if the noise power $\sigma^2$ is known beforehand, the structure $\banded + \sigma^2\I$ inherently satisfies the spiked covariance property, allowing us to bypass the need for convex relaxation for the set $\mathcal{S}$ as done in \cite{donoho2018optimal}. When the noise power is known, both the functions in \eqref{eq:LogLikelihood} and \eqref{eq:LogLikelihoodSing} can be upper bounded by the Frobenius norm as stated in the following theorem under the assumption that largest eigenvalues for $\Cov$ is some finite value\footnote{This assumption is practical as the likelihood function is only defined when the eigenvalues of $\Cov$ are finite}. 
\begin{assumption}
\label{Assumption:Compact}
    For the clutter--plus--noise covariance matrix of dimension $p$, $\Cov$, the noise power $\sigma^2$ is known and $\lambda_{\min}(\Cov)=\sigma^2$. Additionally, there exists a positive constant $c$ such that  $\lambda_{\max}(\Cov)\leq c$.
\end{assumption}
\begin{theorem}
\label{Thm:UpperBound}
    Assume that (A\ref{Assumption:Compact}) holds, then for an \gls*{scm} $\SampCov$, the following convex upper bounds satisfy: 
    \begin{equation}
    \label{eq:UpperHigh}
        F(\Cov;\,\SampCov)\leq \left\|(\Cov-\SampCov)/\sigma^2\right\|_{F}^2+\mathrm{const.},
    \end{equation}
    where $F(\Cov;\,\SampCov)$ is defined in \eqref{eq:LogLikelihood}, and 
    \begin{align}
    \label{eq:UpperLow}
        F_{\mathrm{S}}(\Cov;\,\SampCov)\leq& \sqrt{\matrixdimension}\left\|(\Cov-\SampCov)/\sigma^2\right\|_{F}+\matrixdimension\nonumber\\&+\matrixdimension\log\,c-\log\det\boldsymbol{\Lambda}+\mathrm{const.}\,,
    \end{align}
    where $F_{\mathrm{S}}(\Cov;\,\SampCov)$ is defined in \eqref{eq:LogLikelihoodSing}.
\end{theorem}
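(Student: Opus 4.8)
The plan is to bound the two non-Frobenius terms in the negative log-likelihoods, $\log\det(\Cov\SampCov^{-1})$ in \eqref{eq:LogLikelihood} and $\log\det(\Cov)$ in \eqref{eq:LogLikelihoodSing}, by a trace term that will then combine with $\tr(\Cov^{-1}\SampCov)$ to produce a Frobenius-norm expression. The key elementary fact is the standard scalar inequality $\log t \le t-1$ for $t>0$, applied spectrally: for any positive-definite matrix $\mathbf{M}$ we have $\log\det(\mathbf{M}) \le \tr(\mathbf{M}) - p$.

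For the first bound \eqref{eq:UpperHigh}, I would write $F(\Cov;\SampCov) = \tr(\Cov^{-1}\SampCov) + \log\det(\Cov\SampCov^{-1}) - p + \const$ and apply the matrix log inequality to $\mathbf{M} = \Cov\SampCov^{-1}$ (or to a symmetrized version $\SampCov^{-1/2}\Cov\SampCov^{-1/2}$ which has the same eigenvalues), giving $\log\det(\Cov\SampCov^{-1}) \le \tr(\Cov\SampCov^{-1}) - p$. Hence $F(\Cov;\SampCov) \le \tr(\Cov^{-1}\SampCov) + \tr(\Cov\SampCov^{-1}) - 2p + \const$. The sum of the two trace terms should be re-expressed; I would use that $\tr(\Cov^{-1}\SampCov) + \tr(\SampCov^{-1}\Cov) - 2p = \tr\big((\Cov-\SampCov)(\SampCov^{-1}-\Cov^{-1})\big) = \tr\big((\Cov-\SampCov)\SampCov^{-1}(\Cov-\SampCov)\Cov^{-1}\big)$, using $\SampCov^{-1}-\Cov^{-1} = \SampCov^{-1}(\Cov-\SampCov)\Cov^{-1}$. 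Now Assumption~\ref{Assumption:Compact} enters: $\lambda_{\min}(\Cov)=\sigma^2$ gives $\Cov^{-1} \preceq \sigma^{-2}\I$, and similarly one needs a lower bound $\SampCov \succeq$ something, or rather one bounds $\SampCov^{-1}$ — here I expect we instead bound more directly, writing $\tr\big((\Cov-\SampCov)\SampCov^{-1}(\Cov-\SampCov)\Cov^{-1}\big) \le \sigma^{-2}\,\tr\big((\Cov-\SampCov)\SampCov^{-1}(\Cov-\SampCov)\big)$ and then, since we want a pure Frobenius bound, we actually want to replace $\SampCov^{-1}$ as well; the cleanest route is to observe that for the relaxation we may substitute the inequality $\tr(\Cov^{-1}\SampCov) + \log\det(\Cov\SampCov^{-1}) - p \le \|\SampCov^{-1/2}(\Cov-\SampCov)\SampCov^{-1/2}\|_F^2$-type bound and then use $\|\SampCov^{-1/2}\,\cdot\,\SampCov^{-1/2}\|_F \le \sigma^{-2}\|\cdot\|_F$ via the eigenvalue control. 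I will need to be careful that the $\SampCov$ appearing is invertible in the $p \le K$ case (true a.s. under the Wishart model), and that $\lambda_{\min}(\SampCov)$ can itself be controlled — this is the step I expect to need the $\lambda_{\max}(\Cov)\le c$ half of the assumption, perhaps only for the singular-Wishart bound.

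For the second bound \eqref{eq:UpperLow}, since $\SampCov$ is rank-deficient ($p>K$) and $\SampCov^{-1}$ is unavailable, I would instead bound $\log\det(\Cov) = \sum_i \log\lambda_i(\Cov) \le p\log\lambda_{\max}(\Cov) \le p\log c$ directly from $\lambda_{\max}(\Cov)\le c$, which handles the $\log\det(\Cov)$ term and contributes the $p\log c$ summand; the $-\log\det\boldsymbol{\Lambda}$ term is left as-is (it does not depend on $\Cov$ so it is harmless for the relaxation, and it is carried through explicitly in the statement). It then remains to bound $\tr(\Cov^{-1}\SampCov)$ by a Frobenius term. I would write $\tr(\Cov^{-1}\SampCov) = \tr\big(\Cov^{-1}(\SampCov-\Cov)\big) + p$ and bound $|\tr(\Cov^{-1}(\SampCov-\Cov))| \le \|\Cov^{-1}\|_F\,\|\SampCov-\Cov\|_F$ by Cauchy–Schwarz for the trace inner product, then use $\|\Cov^{-1}\|_F \le \sqrt{p}\,\lambda_{\max}(\Cov^{-1}) = \sqrt{p}/\sigma^2$ (again from $\lambda_{\min}(\Cov)=\sigma^2$), yielding exactly $\tr(\Cov^{-1}\SampCov) \le \sqrt{p}\,\|(\Cov-\SampCov)/\sigma^2\|_F + p$. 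Assembling the pieces gives \eqref{eq:UpperLow}.

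The main obstacle I anticipate is getting the \emph{first} bound into precisely the stated form $\|(\Cov-\SampCov)/\sigma^2\|_F^2$ rather than a weighted Frobenius norm $\|\SampCov^{-1/2}(\Cov-\SampCov)\SampCov^{-1/2}\|_F^2$ or a bound involving $\lambda_{\min}(\SampCov)$: one has to argue that replacing $\SampCov^{-1}$ by $\sigma^{-2}\I$ is legitimate in the relevant direction of the inequality, which is subtle because $\SampCov^{-1}$ can have eigenvalues larger than $\sigma^{-2}$. I suspect the intended argument either (i) quietly uses that on the feasible set $\Cov \succeq \sigma^2\I$ together with a concentration fact $\SampCov \succeq \sigma^2\I$ (so that $\SampCov^{-1}\preceq\sigma^{-2}\I$ too), or (ii) applies the scalar inequality in a form that symmetrically introduces $\Cov^{-1}$ twice — e.g. $t - \log t - 1 \le (t-1)^2$ is \emph{false} for large $t$, so more likely $t-\log t -1 \le \tfrac12 (t-1)^2 / \min(t,1)$ or similar — and then uses the feasibility bound $\lambda(\Cov^{-1/2}\SampCov\Cov^{-1/2}) \ge$ something. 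I would resolve this by first stating the exact feasible-set normalization ($\lambda_{\min}(\Cov)=\sigma^2$) as the operative constraint, deriving $\Cov^{-1}\preceq\sigma^{-2}\I$, and then choosing the trace-identity route above so that only $\Cov^{-1}$ (never $\SampCov^{-1}$) needs to be bounded by $\sigma^{-2}\I$ — i.e. re-deriving the identity as $\tr(\Cov^{-1}\SampCov) + \log\det(\Cov\SampCov^{-1}) - p = \tr\big(\Cov^{-1}(\SampCov-\Cov)\big) + \log\det(\I + \Cov^{-1}(\SampCov-\Cov)) \le \tr\big(\Cov^{-1}(\SampCov-\Cov)\big) - \tr\big(\Cov^{-1}(\SampCov-\Cov)\big) + \tfrac12\|\Cov^{-1}(\SampCov-\Cov)\|^2_?$... and then pin down the quadratic remainder term cleanly with $\Cov^{-1}\preceq\sigma^{-2}\I$. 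Getting that remainder estimate exact, with the right constant, is the crux.
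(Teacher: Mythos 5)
Your argument for the singular case \eqref{eq:UpperLow} is correct and is essentially identical to the paper's: write $\tr(\Cov^{-1}\SampCov)=\tr(\Cov^{-1}(\SampCov-\Cov))+\matrixdimension$, apply Cauchy--Schwarz for the trace inner product, use $\|\Cov^{-1}\|_F\leq\sqrt{\matrixdimension}\,\|\Cov^{-1}\|_2=\sqrt{\matrixdimension}/\sigma^2$ from $\lambda_{\min}(\Cov)=\sigma^2$, and bound $\log\det(\Cov)\leq\matrixdimension\log c$. Nothing to add there.

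For the case $\matrixdimension\leq K$, however, your proof does not close, and you say so yourself: after the step $F\leq\tr(\Cov^{-1}\SampCov)+\tr(\Cov\SampCov^{-1})-2\matrixdimension=\tr\bigl((\Cov-\SampCov)\SampCov^{-1}(\Cov-\SampCov)\Cov^{-1}\bigr)$ you are left needing to dominate $\SampCov^{-1}$ by $\sigma^{-2}\I$, which Assumption~\ref{Assumption:Compact} does not give you. The paper avoids $\SampCov^{-1}$ entirely by taking exactly the route you sketch at the end but do not execute: set $\Delta=\SampCov-\Cov$, write $F=\tr(\I+\Cov^{-1}\Delta)-\log\det(\I+\Cov^{-1}\Delta)-\matrixdimension$, and use the second-order Taylor expansion $\log\det(\I+X)\approx\tr(X)-\tfrac12\tr(X^2)$, so that the linear terms cancel and $F\approx\tfrac12\|\Cov^{-1}(\SampCov-\Cov)\|_F^2\leq\tfrac12\|\Cov^{-1}\|_2^2\|\Cov-\SampCov\|_F^2$; only $\|\Cov^{-1}\|_2=\sigma^{-2}$ is needed, which is exact under (A\ref{Assumption:Compact}). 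You should be aware, though, that the obstruction you identified is genuine and is not actually resolved by the paper either: the paper's derivation is an approximation (it writes ``$\approx$''), and the corresponding scalar inequality $u-\log(1+u)\leq u^2/2$ holds only for $u\geq 0$, i.e.\ the quadratic upper bound can fail in directions where $\SampCov\prec\Cov$. So your diagnosis of the crux is accurate; what is missing from your write-up is simply the quadratic-expansion step that the paper uses to reach the stated form, together with the recognition that this step is a local approximation rather than a global inequality.
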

\noindent The proof is in the Appendix.\hfill \(\blacksquare\)

\noindent We define the upper-bound for the case $\matrixdimension\leq K$ as:
\begin{equation}
\label{eq:UpperBoundHigh}
    \upperbound(\Cov;\;\SampCov)=\left\|(\Cov-\SampCov)/\sigma^2\right\|_{F}^2+\const\,.
\end{equation}
For the case $\matrixdimension>K$ the definition is:
\begin{align}
    \label{eq:UpperBoundLow}
    \upperbound_{S}(\Cov;\;\SampCov)&=\sqrt{\matrixdimension}\left\|(\Cov-\SampCov)/\sigma^2\right\|_{F}+\matrixdimension\nonumber\\&+\matrixdimension\log\,c-\log\det\boldsymbol{\Lambda}+\const\,.
\end{align}
Thus, the minimizer of the squared Frobenius norm provides an estimate for the clutter-plus-noise covariance matrix. We apply results from \cite{bein2015convex} to solve the following convex relaxation problem, assuming we have separately estimated the noise power $\hat{\sigma}^2$:
\begin{align}
\label{eq:OptimRelax}
    \text{\textbf{Convex Relaxation:}} \quad 
    \underset{\Cov \in {\Complex}^{\matrixdimension \times \matrixdimension}}{\text{Minimize}} & \quad \left\| \Cov - \SampCov \right\|_{F}^{2} \\
    \text{s.t.} & \quad \Cov = \banded + \hat{\sigma}^2 \I \nonumber \\
    & \quad \banded \in \mathcal{B}_{b}, \nonumber
\end{align}
%The matrix $\SampCov$ is the \gls{scm} and $\mathcal{B}$ is defined in \eqref{eq:Optim}. 
where $\mathcal{B}_{b}$ and $\SampCov$ are defined in \eqref{eq:Optim}. 

\subsection{A Variational Inequality based Bound for the Convex Relaxation}
\label{Sec:VariationalBound}
%One way to assess the quality of a bound is by finding a concave lower bound \( \lowerbound_{\mathrm{conc}} \) and a convex upper bound \( \upperbound_{\mathrm{conv}} \), such that \( \lowerbound_{\mathrm{conc}} \leq F \leq \upperbound_{\mathrm{conv}} \). In this case, if \( x^{*} = \arg\min F(x) \), \( \underline{x} = \arg\max \lowerbound_{\mathrm{conc}}(x) \), and \( \bar{x} = \arg\min \upperbound_{\mathrm{conv}}(x) \), then the function values satisfy \( \lowerbound_{\mathrm{conc}}(\underline{x}) \leq F(x^{*}) \leq \upperbound_{\mathrm{conv}}(\bar{x}) \). However, this approach does not provide any guarantees on the closeness of \( x^{*} \) to either \( \bar{x} \) or \( \underline{x} \), where closeness refers to the norm difference between these values. One possible approach is to derive the \gls*{crb} for the optimization problem in \eqref{eq:Optim}, but this is challenging due to the presence of non-smooth constraints. Instead, we employ the variational inequality approach from \cite{pflug2012optimization}, which quantifies how far the solution obtained from the convex relaxation deviates from the solution of the original non-convex problem.

Since the convex relaxation \(F_\conv\)  does not converge uniformly to the non-convex function $F$,  it follows that \(\argmin\,F_\conv\) is not necessarily close to \(\argmin\,F\).  Therefore, in this section we construct a variational inequality~\cite{pflug2012optimization} to bound the distance between \(\argmin\,F_\conv\) and \(\argmin\,F\).  Although, this bound is loose, it still provides a useful ballpark figure.

\begin{theorem}[\hspace{-0.2mm}\cite{pflug2012optimization}, Lemma~1.9]
\label{Thm:ConvRelaxation}
    Let $\upperbound(\SampCov;\;\Cov)$ be a convex relaxation for the function $F(\SampCov;\;\Cov)$. Suppose there exists an $\epsilon$ such that 
    \[\sup_{\Cov}|\upperbound(\SampCov;\;\Cov)-F(\SampCov;\;\Cov)|\leq \epsilon,\]
    and \[\upperbound(\SampCov;\;\Cov)\geq \underset{\Cov}{\mathrm{argmin}}\,\upperbound(\SampCov;\;\Cov)+m\left\|\Cov-\underset{\Cov}{\mathrm{argmin}}\,\upperbound(\SampCov;\;\Cov)\right\|_{F}^\gamma,\]
    for some $m$ and $\gamma$.
    Then,
    \[\left\|\underset{\Cov}{\mathrm{argmin}}\,F(\SampCov;\;\Cov)-\underset{\Cov}{\mathrm{argmin}}\,\upperbound(\SampCov;\;\Cov)\right\|_F\leq \left(\frac{2\epsilon}{m}\right)^{\frac{1}{\gamma}}.\]
    
\end{theorem}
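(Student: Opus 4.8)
The plan is to chain two elementary inequalities — the growth (strong-convexity-type) condition for $\upperbound$ and the two-sided $\epsilon$-closeness of $\upperbound$ to $F$ — evaluated at the two minimizers. Write $\Cov_F^{\star}\in\argmin_{\Cov}F(\SampCov;\,\Cov)$ and $\Cov_U^{\star}=\argmin_{\Cov}\upperbound(\SampCov;\,\Cov)$. The growth condition forces $\upperbound$ to have a unique minimizer, so $\Cov_U^{\star}$ is well defined; existence of $\Cov_F^{\star}$ is part of the hypotheses (in our setting it follows from Assumption~\ref{Assumption:Compact}, which restricts $\Cov$ to a compact set on which $F$ is continuous).

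First I would instantiate the growth condition at the point $\Cov=\Cov_F^{\star}$, which gives
\[
\upperbound(\SampCov;\,\Cov_F^{\star})\;\geq\;\upperbound(\SampCov;\,\Cov_U^{\star})+m\,\bigl\|\Cov_F^{\star}-\Cov_U^{\star}\bigr\|_F^{\gamma},
\]
equivalently $m\,\|\Cov_F^{\star}-\Cov_U^{\star}\|_F^{\gamma}\leq\upperbound(\SampCov;\,\Cov_F^{\star})-\upperbound(\SampCov;\,\Cov_U^{\star})$. The key point is that it is $F$'s minimizer — not $\upperbound$'s — that is substituted here, because that is the point whose distance to $\argmin\upperbound$ we want to control.

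Next I would upper bound the right-hand side using $\sup_{\Cov}|\upperbound(\SampCov;\,\Cov)-F(\SampCov;\,\Cov)|\leq\epsilon$ termwise: $\upperbound(\SampCov;\,\Cov_F^{\star})\leq F(\SampCov;\,\Cov_F^{\star})+\epsilon$ and $-\upperbound(\SampCov;\,\Cov_U^{\star})\leq -F(\SampCov;\,\Cov_U^{\star})+\epsilon$. Adding these, and then using optimality of $\Cov_F^{\star}$ for $F$, i.e.\ $F(\SampCov;\,\Cov_F^{\star})-F(\SampCov;\,\Cov_U^{\star})\leq 0$, collapses the right-hand side to $2\epsilon$. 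Combining with the previous display yields $m\,\|\Cov_F^{\star}-\Cov_U^{\star}\|_F^{\gamma}\leq 2\epsilon$, and taking $\gamma$-th roots gives $\|\argmin F-\argmin\upperbound\|_F\leq(2\epsilon/m)^{1/\gamma}$, as claimed.

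There is essentially no obstacle in the argument itself; the only things requiring care are bookkeeping of inequality directions — it is essential that the $\epsilon$-bound is two-sided so that both substitutions push in the correct direction — and ensuring $\argmin\upperbound$ is a singleton so the stated conclusion is unambiguous. For the paper's application, the remaining work is external to this lemma: one feeds in the concrete $\gamma$ and $m$ from the strong-convexity analysis of the Frobenius-norm relaxation $\upperbound$, together with the explicit $\epsilon$ from Theorem~\ref{Thm:EpsilonDerive}, to obtain a numerical ballpark for the gap between the relaxed and non-convex estimators.
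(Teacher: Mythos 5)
The paper does not prove this statement itself --- it is imported verbatim from \cite{pflug2012optimization} (Lemma~1.9) --- so there is no in-paper proof to compare against. Your argument is correct and is the standard proof of this stability lemma: instantiate the growth condition at $\underset{\Cov}{\argmin}\,F$, convert the resulting difference of $\upperbound$-values into a difference of $F$-values at a cost of $2\epsilon$ using the sup bound, and eliminate that difference by optimality of $\underset{\Cov}{\argmin}\,F$. You also correctly read the growth condition as involving the minimum \emph{value} $\min_{\Cov}\upperbound$ on the right-hand side (the statement's ``$\argmin$'' there is a typo), and your observations that the growth condition makes the minimizer of $\upperbound$ unique and that only the one-sided inequality $\upperbound\geq F$ from Theorem~\ref{Thm:UpperBound} is actually needed for one of the two $\epsilon$ terms are both sound.
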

\noindent Since, in our case the upper bound is the Frobenius norm, if we chose $m=1$, then the growth condition is satisfied. We need to show that sup-norm of $\upperbound(\Cov)-F(\Cov)$ is bounded, which is shown in the following theorem.
\begin{theorem}
\label{Thm:EpsilonDerive}
    Assume that (A\ref{Assumption:Compact}) holds, then $\epsilon$ and $\gamma$ in Theorem~\ref{Thm:ConvRelaxation} for the given convex relaxation in Theorem~\ref{Thm:UpperBound} are,
    \begin{enumerate}
        \item Case $p\leq K$, with $F(\SampCov;\;\Cov)$ defined in \eqref{eq:LogLikelihood} and $\upperbound(\SampCov;\;\Cov)$ in \eqref{eq:UpperBoundHigh}:
        \begin{align}
            \epsilon=&\frac{\matrixdimension}{\sigma^4}(c+\lambda_{\max}(\SampCov))^2+\matrixdimension\log\Big(\frac{\lambda_{\max}(\SampCov)}{\sigma^2}\Big)\nonumber\\ &+\matrixdimension-\frac{\tr(\SampCov)}{c}.
        \end{align}
        For this case, $\gamma=2$.
    \item Case $p>K$, with $F(\SampCov;\;\Cov)$ defined in \eqref{eq:LogLikelihoodSing} and $\upperbound(\SampCov;\;\Cov)$ in \eqref{eq:UpperBoundLow}:
            \begin{align}
            \epsilon=\frac{\matrixdimension}{\sigma^2}(c+\lambda_{\max}(\SampCov))+\matrixdimension\log\frac{c}{\sigma^{2}}+\matrixdimension-\frac{\tr(\SampCov)}{c}.
        \end{align}
        For this case, $\gamma=1$.
    \end{enumerate}
    %The matrix $\SampCov$ is the \gls*{scm} constructed using $K$ samples.

\end{theorem}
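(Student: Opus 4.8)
The plan is to bound $\sup_{\Cov}|\upperbound(\SampCov;\Cov)-F(\SampCov;\Cov)|$ by controlling each term in the difference separately, using Assumption~(A\ref{Assumption:Compact}) to bound the spectrum of $\Cov$ and the crude bound that a single eigenvalue controls the Frobenius norm up to a factor $\sqrt{\matrixdimension}$. For the case $\matrixdimension\leq K$, I would write
\[
\upperbound(\Cov;\SampCov)-F(\Cov;\SampCov)=\big\|(\Cov-\SampCov)/\sigma^2\big\|_F^2-\tr(\Cov^{-1}\SampCov)-\log\det(\Cov\SampCov^{-1})+\matrixdimension
\]
(dropping the common constants). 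First I would upper bound the Frobenius term: since $\lambda_{\max}(\Cov)\leq c$ and $\|\Cov-\SampCov\|_F\leq\sqrt{\matrixdimension}\,\|\Cov-\SampCov\|_{\mathrm{op}}\leq\sqrt{\matrixdimension}\,(c+\lambda_{\max}(\SampCov))$, the first term is at most $\frac{\matrixdimension}{\sigma^4}(c+\lambda_{\max}(\SampCov))^2$. Next I would lower bound $\tr(\Cov^{-1}\SampCov)\geq \tr(\SampCov)/\lambda_{\max}(\Cov)\geq \tr(\SampCov)/c$ (here I use that $\Cov^{-1}\succeq c^{-1}\I$). Finally $\log\det(\Cov\SampCov^{-1})=\sum_i\log\lambda_i(\Cov)-\sum_i\log\lambda_i(\SampCov)$; using $\lambda_i(\Cov)\geq\sigma^2$ and $\lambda_i(\SampCov)\leq\lambda_{\max}(\SampCov)$ gives $-\log\det(\Cov\SampCov^{-1})\leq \matrixdimension\log(\lambda_{\max}(\SampCov)/\sigma^2)$. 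Collecting these three bounds and adding $\matrixdimension$ yields exactly the stated $\epsilon$, and $\gamma=2$ because the relaxation is a squared Frobenius norm with leading coefficient $1/\sigma^4\geq$ (choose $m$ accordingly, or simply $m=1$ after absorbing $\sigma^{-4}$).

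For the case $\matrixdimension>K$, the argument is structurally the same but now the relaxation is the \emph{non-squared} Frobenius norm scaled by $\sqrt{\matrixdimension}$, so I would write
\[
\upperbound_S(\Cov;\SampCov)-F_S(\Cov;\SampCov)=\sqrt{\matrixdimension}\,\big\|(\Cov-\SampCov)/\sigma^2\big\|_F+\matrixdimension\log c-\tr(\Cov^{-1}\SampCov)-\log\det\Cov+\matrixdimension
\]
(after the $-\log\det\boldsymbol{\Lambda}$ terms cancel between $\upperbound_S$ and $F_S$). Bounding $\sqrt{\matrixdimension}\,\|(\Cov-\SampCov)/\sigma^2\|_F\leq \frac{\matrixdimension}{\sigma^2}(c+\lambda_{\max}(\SampCov))$ using the same operator-norm estimate, then $-\tr(\Cov^{-1}\SampCov)\leq -\tr(\SampCov)/c$ and $-\log\det\Cov\leq -\matrixdimension\log\sigma^2$ (since $\lambda_i(\Cov)\geq\sigma^2$), and combining with $\matrixdimension\log c+\matrixdimension$ gives $\epsilon=\frac{\matrixdimension}{\sigma^2}(c+\lambda_{\max}(\SampCov))+\matrixdimension\log\frac{c}{\sigma^2}+\matrixdimension-\frac{\tr(\SampCov)}{c}$, with $\gamma=1$ because the dominant (and only unbounded) term of the relaxation grows linearly in $\|\Cov-\mathrm{argmin}\,\upperbound_S\|_F$.

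The main obstacle is handling the \emph{absolute value} in the supremum: the estimates above bound $\upperbound-F$ from above, but Theorem~\ref{Thm:ConvRelaxation} requires a two-sided bound $|\upperbound-F|\leq\epsilon$. Since $\upperbound$ is by construction an upper bound for $F$ (Theorem~\ref{Thm:UpperBound}), we have $\upperbound-F\geq 0$, so the absolute value is automatically $\upperbound-F$ and the one-sided estimate suffices; I would state this reduction explicitly at the start. A secondary subtlety is that the supremum is over all feasible $\Cov$, so every bound must be uniform in $\Cov$ — this is exactly why Assumption~(A\ref{Assumption:Compact}), which caps $\lambda_{\max}(\Cov)\leq c$ and pins $\lambda_{\min}(\Cov)=\sigma^2$, is invoked at each step; without the compactness cap $c$ the Frobenius and $\log\det$ terms would be unbounded. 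Finally, I would double-check that the verification of the growth condition uses $m=1$ consistently with the way $\sigma$-factors were pulled inside the norm in \eqref{eq:UpperBoundHigh}–\eqref{eq:UpperBoundLow}, so that the final bound $(2\epsilon/m)^{1/\gamma}$ in Theorem~\ref{Thm:ConvRelaxation} matches the scaling used in the algorithm.
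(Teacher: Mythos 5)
Your proposal is correct and follows essentially the same route as the paper's proof: both exploit $\upperbound\geq F$ to reduce the sup of the absolute difference to an upper bound on $\upperbound$ minus a lower bound on $F$, bound the Frobenius term by $\matrixdimension(c+\lambda_{\max}(\SampCov))^2/\sigma^4$ (resp.\ $\matrixdimension(c+\lambda_{\max}(\SampCov))/\sigma^2$), use $\tr(\Cov^{-1}\SampCov)\geq\tr(\SampCov)/c$ and the eigenvalue bounds on the $\log\det$ terms, and verify the growth condition with $m=1$, $\gamma=2$ (resp.\ $\gamma=1$). The only cosmetic difference is that you pass through the operator norm via $\|\Cov-\SampCov\|_F\leq\sqrt{\matrixdimension}\,\|\Cov-\SampCov\|_{\mathrm{op}}$ whereas the paper applies the triangle inequality to the Frobenius norms directly; both yield the identical bound.
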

\noindent The proof is in the Appendix.\hfill \(\blacksquare\)

\noindent For the case \(\matrixdimension \leq K\), the term \((2\epsilon)^{1/2}\), which bounds the distance between the minimizing arguments of the non-convex objective and its convex relaxation, scales as \(\mathcal{O}(\sqrt{p})\). For the case \(p > K\), the bound becomes $2\epsilon$ and follows a growth rate of \(\mathcal{O}(\matrixdimension)\). This change in scaling behavior arises due to differences in the convex relaxation function for each regime. Specifically, when \(\matrixdimension \leq K\), the convex relaxation corresponds to the squared Frobenius norm, whereas for \(\matrixdimension > K\), it is the Frobenius norm itself. Despite this structural difference, the optimization problem remains unchanged in both cases, as the optimal solution is the same whether considering the squared or unsquared Frobenius norm. The only distinction lies in the derived bounds.

The assumption (A\ref{Assumption:Compact}) ensures that the search space is a compact set. Bounds for both cases grow linearly with respect to the constant $c$. We use the following function to quantify the normalized behavior of the bound. We define
\[\mathrm{Relative~Bound}:=\Big(\frac{2\epsilon}{p}\Big)^{\frac{1}{\gamma}}\cdot\frac{1}{c}.\]
The above relative bound captures how good the bound is normalized over the search space represented using the constant $c$ used in (A\ref{Assumption:Compact}). In Fig.\,\ref{fig:epsilonPlot}, we plot the Relative Bound with respect to \(c\) for both regimes for different values of $K$. We assume \(\sigma^2 = 1\), allowing us to represent \(c\) as a ratio, expressed in dB. For \(\matrixdimension=256\), we construct a synthetic matrix with $25$ spikes and bandsize $L=4$, generate the corresponding \gls*{scm}, and compute \(\lambda_{\max}(\SampCov)\) and \(\tr(\SampCov)\). The results are averaged over 100 Monte Carlo simulations to obtain reliable estimates. We observe that as the search space is increased the relative bound decreases. The limiting value of the bound will be $1$ as $p,c\rightarrow\infty$. 

\noindent\textbf{Remark:} The derived variational bounds are loose, as noted in \cite{pflug2012optimization}, because the upper bounding function is not globally Lipschitz. If we introduce an additional constraint in the optimization problem \eqref{eq:Optim}—specifically, that the \( \lambda_{\max}(\Cov) \) is bounded by a fixed constant—then the spiked covariance structure may no longer hold. Finding a tighter bound that is globally Lipschitz continuous and aligns with standard objective functions used in statistical learning theory for matrix estimation remains an open problem, which we leave for future work.
 
We formulated the model for the clutter--plus--noise covariance matrix and proposed an optimization problem to estimate the matrix. We used variational inequality based bound to assess the quality of the convex relaxation. In the next section, we will explain the estimation of the noise power $\sigma^2$ using the technique in \cite{Jain2023} and the algorithm for solving \eqref{eq:OptimRelax}.

\begin{figure}[htbp]

        \includegraphics[width=\linewidth]{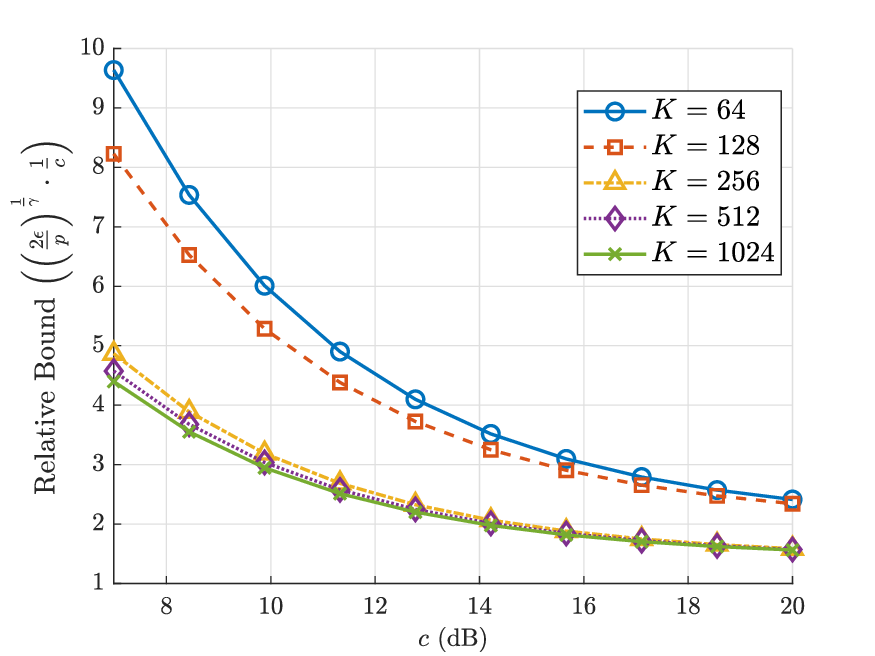}
    \caption{The relative bound decreases as the number of samples \( K \) increases. This is because the sample covariance matrix \( \SampCov \) becomes a more accurate approximation of the true covariance matrix from above. Additionally, as \( c \) increases, the bound further decreases. While the numerical values on the \( y \)-axis have limited quantitative significance, the plot qualitatively illustrates that the bound improves as \( K \to \infty \).
 }
    \label{fig:epsilonPlot}
\end{figure}
\begin{comment}
\textbf{Case 1:} $K\geq \matrixdimension$, in this case we have $\lowerbound(\Cov)=\tr(\Cov^{-1}\SampCov)+\tr(\Cov_{0}^{-1}(\Cov-\Cov_0))+\log\det(\SampCov^{-1}\Cov_{0})-\matrixdimension$ and $\bar{F}=\left\|(\Cov-\SampCov)/\sigma^2\right\|_{F}^{2}$. In this case $\lowerbound$ satisfies the following growth condition:

\[\lowerbound(\Cov)\geq \arg\min\lowerbound(\Cov)+\frac{1}{2}c\left\|\Cov-\arg\min\lowerbound(\Cov)\right\|_{F}^2\]
where . Now given that $K>\matrixdimension$, the value of $\lambda_{min}(\SampCov)=0$ only for a set of measure zero, therefore $\lowerbound$ is strongly convex almost everywhere in that case we have $\gamma=2$.

\textbf{Case 2:} $K<\matrixdimension$, in this case we do not have a lower bound with quadratic growth rate, in this case we go with the sublinear growth rate. We use the Complex Singular Wishart distrbution and the lower bound with respect to $\Cov$. 
\[\lowerbound=\tr(\Cov^{-1}\SampCov)+\mathrm{constant}.\]
\end{comment}
\section{Covariance Estimation Algorithm}
\label{Sec:Algo}
The joint estimation of noise power and banded covariance is presented in Algorithm~\ref{alg:hat}, detailed in Sec.\,\ref{Sec:Algo}–\ref{Sec:NoiseEstPlusBanded}. In Sec.\,\ref{Sec:Algo}-\ref{Sec:Thms}, we state the conditions under which the recovered clutter--plus--noise covariance matrix is positive definite in Theorem~\ref{Thm:SigmaPositive} and state the recovery of the bandsize in Theorem~\ref{Thm:BandRecover}.
\subsection{Banded+Spiked Covariance Estimation}
\label{Sec:NoiseEstPlusBanded}
In Sec.\,\ref{Sec:Algo}-\ref{Sec:NoiseEstPlusBanded}\ref{Sec:NoiseEst} we propose a technique for noise power estimation. In Sec.\,\ref{Sec:Algo}-\ref{Sec:NoiseEstPlusBanded}\ref{Sec:BandedEst} we outline the method for estimating the banded covariance matrix. We conclude this subsection by presenting Algorithm~\ref{alg:hat}, which summarizes the complete procedure.

\subsubsection{Noise Power Estimation}
\label{Sec:NoiseEst}
The constraint on the parameter $r$ in the definition of $\mathcal{S}$ in \eqref{eq:Optim} is non-convex. However, by assuming a spiked covariance matrix structure, we can leverage the approach in \cite{donoho2018optimal, Jain2023} to estimate both the parameter $\hat{r}$ and the noise power $\hat{\sigma}^2$. Notably, we primarily need an accurate estimate of $\sigma^2$, for which consistency is established in \cite{donoho2018optimal}. The estimation procedure is as follows:
\begin{enumerate}
    \item Compute the singular values $\lambda$ of the \gls*{scm} $\SampCov$ in \eqref{eq:SCM} with $K$ samples, using the \gls*{svd} algorithm The singular value computation has a complexity of $\mathcal{O}(\matrixdimension^3)$. 
    \item Obtain the median of the singular values and normalize it by the median of the \gls*{mp} distribution with parameter $\matrixdimension/K$. This normalized value provides the estimate $\hat{\sigma}^2$. Assuming the \gls*{svd} algorithm outputs the singular values in sorted order, this step has a complexity of $\mathcal{O}(1)$; otherwise it is $\mathcal{O}(\matrixdimension\log\matrixdimension)$. 
\end{enumerate}
The estimate $\hat{r}$ is then determined by counting the number of singular values that exceed $\hat{\sigma}^2$. We adopt this approach due to its efficiency over the \gls*{mle} based methods \cite{kang2014rank} which incur a computational overhead of $\mathcal{O}(\matrixdimension^2)$ to estimate the noise power. In contrast, our method achieves this in $\mathcal{O}(1)$ or in the unsorted case $\mathcal{O}(\matrixdimension\log\matrixdimension)$, given that the \gls*{svd} algorithm returns the singular values in sorted order. The median for \gls*{mp} distribution needs to be computed numerically for the given ratio $\matrixdimension/K$, which can be done offline depending upon the desired accuracy.
\vspace{-2mm}
\subsubsection{Banded Matrix Estimation}
\label{Sec:BandedEst}
We use the \gls*{bcd} based convex optimization algorithm used in \cite{bein2015convex} to the estimate of the clutter--plus--noise covariance matrix. Conventional convex optimization algorithms require that the objective function and the constraint should be smooth with respect to the optimization variables. However, enforcing a banded structure to a matrix introduces a non-smooth convex constraint where conventional convex optimization algorithms fail. Eq.\,\eqref{eq:OptimRelax}, can be expressed as the following optimization problem, once we have estimated the noise power $\hat\sigma^2$ from the previous section:
  \begin{align}
    \label{eq:primal}
    \hat\Cov = \arg\min_{\Cov}\left\{ \half\|\Cov-(\S+\hat\sigma^2\I)\|_F^2+\mu\|\Cov\|_{2,1}^* \right \}.
\end{align}
The parameter $\mu$ is the regularization parameter and the penalty function $\left\|\cdot\right\|_{2,1}^{*}$ is defined as follows:
\begin{align}
  \label{eq:penalty}
  \|\Cov\|_{2,1}^*=\sum_{\ell=1}^{\matrixdimension-1}\sqrt{\sum_{m=1}^\ell w_{\ell m}^2\left\|\VEC(\Cov_{s_m})\right\|_2^2}.
\end{align}
The set of indices in \eqref{eq:penalty},\,$s_m$, also called group lasso indices, enforce the banded sparsity and is defined as:
\begin{equation}
\label{eq:BandedSparse}
    s_m=\{(j,\,k):|j-k|=\matrixdimension-m\},
\end{equation}
and the weights $w_{\ell m}$ represent the generalized hierarchical penalty:
\begin{equation}\label{eq:genw}
w_{\ell,m} = \frac{\sqrt {2\ell}}{\ell - m +1},  \ \mbox{for}\  1 \leq m \leq \ell, \  1 \leq \ell \leq \matrixdimension - 1,
\end{equation}
  with $1\le m\le\ell\le \matrixdimension$.
With \eqref{eq:penalty}, we have modeled the bandedness defined in the set $\mathcal{B}_{b}$ in \eqref{eq:OptimRelax}. The index set $s_m$ defined in \eqref{eq:BandedSparse} and the weights $w_{\ell m}$ enforce the banded structure in matrix $\Cov$. The clutter covariance matrix $\banded=\Cov-\sigma^2\I$. Note that the indexing in $s_m$ is from corners of the matrix towards the diagonal. The penalty function can be expressed as a element-wise Hadamard product ($\odot$) with a sequence of weighting matrices $\{\mathbf{W}^{(\ell)}\}$,
\begin{equation}
    \label{eq:WeightMatrix}
    \mathbf{W}^{(\ell)}_{s_m}=w_{\ell m}\mathbbm{1}_{\{m\le\ell\}},
\end{equation}where $w_{\ell m}$ is defined in \eqref{eq:genw} and $\mathbbm{1}_{\{\cdot\}}$ is the indicator function. Eq.\eqref{eq:penalty} in terms of weighting matrix is expressed as follows:
\begin{align}
    \label{eq:groupLasso}
    &\sum_{\ell=1}^{\matrixdimension-1}\sqrt{\sum_{m=1}^\ell w_{\ell m}^2\left\|\VEC(\Cov_{s_m})\right\|_2^2}\\\nonumber &=\sum_{\ell=1}^{\matrixdimension-1}\left\|\VEC((\mathbf{W}^{(\ell)}\odot\Cov)_{g_{\ell}})\right\|_{2}.
\end{align}
The set \( g_{\ell} := \cup_{m=1}^{\ell}s_m \) is the union of the index sets \( s_m \) defined in \eqref{eq:BandedSparse}, which captures the sparsity pattern as we move from the corner of the matrix toward the diagonal. Eq.~\eqref{eq:groupLasso} represents a group lasso formulation, as it involves the summation of non-squared 2-norms, akin to minimizing the \(\ell_1\)-norm. Here, the matrix \( \mathbf{W}^{(\ell)} \odot \Cov \) is vectorized, and the elements corresponding to \( g_{\ell} \) are selected from \( \VEC(\mathbf{W}^{(\ell)} \odot \Cov) \), followed by taking their 2-norm. This procedure is applied for all \( 1 \leq \ell \leq \matrixdimension - 1 \), thereby enforcing a group lasso penalty.
 The indexing set enforces a group lasso penalty and is non-smooth but is convex.
It has been shown in \cite{bein2015convex} and the references therein that the dual of \eqref{eq:primal} is separable; therefore, \gls*{bcd}
can be used to solve the problem similar to the lasso algorithm. The dual of \eqref{eq:primal} is given as:
\begin{align}
\label{eq:dual}
    \underset{A^{(\ell)}\in\Complex^{\matrixdimension\times \matrixdimension}}{\text{Minimize}}&\quad \half
    \left\|\S+\hat\sigma^2\I-\mu\sum_{\ell=1}^{\matrixdimension-1}
    \mathbf{W}^{(\ell)}\odot\mathbf{{\Al}}\right\|_F^2\\
      \text{s.t.}&\quad \|\mathbf{\Al}_{\gl}\|_2\le 1,~\mathbf{\Al}_{\gl^c}=0,~~1\le\ell\le \matrixdimension-1.\nonumber
\end{align}
The set $g_\ell$ and the weight matrices, $\mathbf{W}^{(\ell)}$ are defined in \eqref{eq:WeightMatrix}. The formulation of the dual \eqref{eq:dual} from the primal \eqref{eq:primal} is derived in \cite[Appendix A]{bein2015convex}, we add the factor $\hat\sigma^2\I$ to enforce the spiked covariance condition.

Algorithm \ref{alg:hat} gives the joint estimates for the noise power $\hat\sigma^2$ and the \gls*{bcd} algorithm for solving
\eqref{eq:dual}, which by the primal-dual relation in turn gives a solution to \eqref{eq:primal}.
The matrices $\mathbf{A}^{(\ell)}$ correspond to each dual variable matrix. The update over each $\mathbf{\Al}$ involves projection onto an
ellipsoid, which amounts to finding a root of the univariate function,
\begin{align}
  h_\ell(\nu)=\sum_{m=1}^\ell\frac{w_{\ell m}^2}{(w_{\ell
      m}^2+\nu)^2}\|\mathbf{\hat R}^{(\ell)}_{s_m}\|^2-\mu^2,\label{eq:nu}
\end{align}
where $w_{\ell m}$ is defined in \eqref{eq:genw}, and $\hat{\mathbf{R}}$ is defined in Algorithm~\ref{alg:hat}. The term \( \max\{0, \hat{\nu}_{\ell} \} \) in the Step 2 of the inner iteration in Algorithm~\ref{alg:hat} is analogous to the power allocation process in the water-filling algorithm Here, power is allocated in terms of the \(\ell_2\)-norm to the subdiagonals of the estimated matrix \( \hat{\mathbf{A}} \). Moreover, the \gls*{kkt} conditions for \eqref{eq:dual} impose that \( \hat{\nu}_{\ell} \geq 0 \), ensuring that only non-negative roots are admissible for \eqref{eq:nu}. For a more in-depth analysis on ellipsoidal projection, see \cite[Appendix~B]{bein2015convex}.

The computational complexity of Algorithm~\ref{alg:hat} depends on the required accuracy for the root-finding process. Given that we perform $\matrixdimension$ matrix multiplications of size $\matrixdimension \times \matrixdimension$, the overall complexity is $\max(\mathcal{O}(\matrixdimension^4), \mathcal{O}(C_r))$, where $C_r$ denotes the complexity of the root-finding algorithm in \eqref{eq:nu}. The algorithm is computationally expensive compared to our previous work \cite{Jain2023}. Algorithm~\ref{alg:hat} is also suitable for any other tighter convex upper bound, as it can serve as a base problem within a proximal gradient algorithm~\cite{rockafellar1997convex, nesterov2013gradient} that leverages this refined bound.

\subsection{Positive Definiteness and Bandsize Recovery for the Estimated Covariance Matrix}
\label{Sec:Thms}
To establish the positive definiteness of the estimated covariance matrix \( \hat{\Cov} \), we utilize \cite[Theorem~10]{bein2015convex}. We begin by defining a random set and stating some assumptions used in \cite{bein2015convex} to facilitate the analysis. Then, using Theorem~\ref{Thm:psd} we show in Theorem~\ref{Thm:SigmaPositive} that \( \hat{\Cov} \) remains positive definite with high probability over the random set defined below.

\begin{definition}
\label{Def:RandomSet}
    For any $x>0$, the random set $\mathcal{A}_{x}$ is expressed as,
    \[\mathcal{A}_{x}:=\left\{\SampCov:\,\max_{1\leq i,j\leq \matrixdimension}|\SampCov_{ij}-\Cov^{*}_{ij}|\leq x\sqrt{\log\matrixdimension/n}\right\},\]
    where $\SampCov$ is the \gls*{scm} with $K$ samples and $\Cov^*$ is the covariance matrix of dimension $\matrixdimension$. 
\end{definition}
\begin{assumption}
    \label{Assumption:Tail}
    Let $\mathbf{x}=(x_1,\,x_2,\,\cdots,\,x_\matrixdimension)^{T}\in\Complex^{p\times 1}$ with zero mean, $\E[\mathbf{x}]=\mathbf{0}$, and denote $\E[\mathbf{x}\mathbf{x}^{H}]=\Cov^{*}$. Each $X_j$ is marginally sub-Gaussian:
\[\E\left[\exp\left(\frac{t x_{j}}{\sqrt{\Cov_{jj}^{*}}}\right)\right]\leq \exp(Ct^2),\]
for all $t\geq 0$ and for some constant $C>0$ that is independent of j. Moreover, $\max_{ij}|\Cov_{ij}^{*}|\leq S_0$, for some constant $S_0>0$.
\end{assumption}
\begin{assumption}
    \label{Assumption:Growth}
    The dimension $\matrixdimension$ can grow with $K$ at most exponentially in $K$: $\kappa_{0}\log K\leq \log\matrixdimension\leq \kappa_{1}K$, for some constant $\kappa_{0},\,\kappa_{1}>0$.
\end{assumption}
\begin{assumption}
\label{Assumption:LeastEig}
    The least eigenvalue of $\Cov^{*}$ satisfies \(\lambda_{\min}(\Cov^{*})\geq 2C'L\sqrt{\log\matrixdimension/K}\) with the bandsize $L$ for some constant $C'>0$.
\end{assumption}
\begin{assumption}
\label{Assumption:BandedSNR}
    The value \(\min_{\ell\in B(\Cov^{*})}\left\|\Cov_{s_{\ell}}^{*}\right\|_{2}/\sqrt{2\ell}\geq c'\) for some constant $c'>0$. The set $s_{\ell}$ is defined in \eqref{eq:BandedSparse}. The set $B(\Cov^{*})$ is an index such that $\|{\Cov^{*}}_{g_{B(\Cov^{*})}}\|_2\leq M_1$ and $\|{{\Cov^{*}}_{s_{B(\Cov^{*})+1}}}\|_2> M_2$, where $M_1,M_2\geq0$ are arbitrarily small constants and the set $g_{\ell}:=\cup_{m=1}^{\ell}s_{m}$.
\end{assumption}
\noindent We state Theorem~\ref{Thm:psd} when no spiked covariance matrix model is assumed, i.e., $\sigma^2=0$.
\begin{theorem}[\hspace{-0.2mm}\cite{bein2015convex},\,Theorem\,10]
\label{Thm:psd}
Suppose that (A\ref{Assumption:Tail}), (A\ref{Assumption:Growth}), (A\ref{Assumption:LeastEig}) and (A\ref{Assumption:BandedSNR}) hold. The estimator defined in \eqref{eq:primal} when $\hat\sigma^2=0$, with weights given by \eqref{eq:genw}, and with regularization parameter in \eqref{eq:primal}, \gls*{scm} $\SampCov$ with $K$ samples, $\mu=2x\sqrt{\log\matrixdimension/K}$, has minimum eigenvalue at least $C'L\sqrt{\log\matrixdimension/K}>0$ with probability larger than $1-c_1/\matrixdimension$, for some $c_1>0$ over the set $\mathcal{A}_{x}$ defined in Def.\,\ref{Def:RandomSet}.
\end{theorem}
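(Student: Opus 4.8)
The plan is to read the claim off as \cite[Theorem~10]{bein2015convex}: with $\hat\sigma^2=0$ the program \eqref{eq:primal} is exactly the hierarchical group-lasso banded covariance estimator analysed there, so it suffices to reconstruct that argument in the present notation in three stages --- a high-probability entrywise-closeness event, a deterministic operator-norm perturbation bound on that event, and a Weyl-inequality step. For the first stage I would show $\mathbb{P}(\mathcal{A}_x)\geq 1-c_1/\matrixdimension$. Under (A\ref{Assumption:Tail}) each entry $\SampCov_{ij}-\Cov^{*}_{ij}$ is an average of $K$ i.i.d.\ products of sub-Gaussian coordinates, hence sub-exponential; a Bernstein bound gives $\mathbb{P}\big(|\SampCov_{ij}-\Cov^{*}_{ij}|>x\sqrt{\log\matrixdimension/K}\big)\leq \matrixdimension^{-c}$ once $x$ is a large enough multiple of the constants $C,S_0$ of (A\ref{Assumption:Tail}), and a union bound over the $O(\matrixdimension^2)$ entries --- with (A\ref{Assumption:Growth}) guaranteeing $\sqrt{\log\matrixdimension/K}\to 0$ and keeping the exponent controlled --- yields the stated $1-c_1/\matrixdimension$.

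For the second stage, I would run the KKT/basic-inequality argument for the penalty $\|\cdot\|_{2,1}^{*}$ of \eqref{eq:penalty} on the event $\mathcal{A}_x$ with $\mu=2x\sqrt{\log\matrixdimension/K}$. The optimality conditions for \eqref{eq:primal} (equivalently the dual \eqref{eq:dual}) identify $\hat\Cov-\SampCov$ with an element of $\mu\,\partial\|\hat\Cov\|_{2,1}^{*}$, whose groups are bounded in the associated dual norm by a multiple of $\mu$; combined with the entrywise control of $\SampCov-\Cov^{*}$ on $\mathcal{A}_x$ this bounds $\hat\Cov-\Cov^{*}$ group by group. The hierarchical weights \eqref{eq:genw} make off-band groups expensive, so their estimated mass is forced down to $O(\mu)$, while (A\ref{Assumption:BandedSNR}) guarantees that the genuinely non-negligible on-band groups are not over-shrunk and that the effective bandwidth of $\hat\Cov$ remains $O(L)$. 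Splitting $\hat\Cov-\Cov^{*}$ into its band-$L$ part and a small off-band remainder and applying a Gershgorin/Schur-test bound to the banded piece converts the per-entry error of order $\sqrt{\log\matrixdimension/K}$ into $\|\hat\Cov-\Cov^{*}\|_{\mathrm{op}}\leq C'L\sqrt{\log\matrixdimension/K}$ for a suitable $C'$.

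The third stage is immediate: by Weyl's inequality,
\[
\lambda_{\min}(\hat\Cov)\geq\lambda_{\min}(\Cov^{*})-\|\hat\Cov-\Cov^{*}\|_{\mathrm{op}}\geq C'L\sqrt{\log\matrixdimension/K}>0,
\]
using (A\ref{Assumption:LeastEig}), which gives $\lambda_{\min}(\Cov^{*})\geq 2C'L\sqrt{\log\matrixdimension/K}$, together with the operator-norm bound of the second stage; this is the assertion.

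The main obstacle is the second stage: turning the non-smooth hierarchical group-lasso optimality conditions into a sharp operator-norm error bound with the correct $L$-dependence. This is exactly where (A\ref{Assumption:BandedSNR}) and the specific weights \eqref{eq:genw} are used --- they ensure that the penalty neither leaks appreciable mass beyond an $O(L)$ band nor crushes the on-band entries --- and it is the technical core of \cite[Theorem~10]{bein2015convex}, which I would invoke rather than reproduce in full here.
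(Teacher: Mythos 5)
The paper gives no proof of this statement at all---it is imported verbatim as \cite[Theorem~10]{bein2015convex}, and your proposal ultimately rests on the same citation, so the two treatments coincide in substance. Your three-stage sketch (concentration to get $\mathbb{P}(\mathcal{A}_x)\geq 1-c_1/\matrixdimension$, a KKT/group-lasso argument yielding $\|\hat\Cov-\Cov^*\|_{\mathrm{op}}\leq C'L\sqrt{\log\matrixdimension/K}$ on $\mathcal{A}_x$, then Weyl's inequality with (A\ref{Assumption:LeastEig})) is a faithful outline of how the cited result is obtained---indeed the paper itself invokes exactly that operator-norm bound, as \cite[Theorem~9]{bein2015convex}, in its proof of Theorem~\ref{Thm:SigmaPositive}---and you correctly defer the technical core to the reference rather than re-deriving it.
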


The assumptions (A\ref{Assumption:Tail}) and (A\ref{Assumption:Growth}) used in Theorem~\ref{Thm:psd} are standard statistical assumptions commonly used in high-dimensional statistics \cite{bickel2008regularization} and are practical in a \gls*{stap} setting. Assumption (A\ref{Assumption:BandedSNR}) imposes a signal-to-noise ratio like constraint on the banded structure of the covariance matrix—specifically, it compares the magnitude of matrix elements within the bandsize $L$ to those outside it. Under this assumption, elements within the band are required to be larger in magnitude than those outside by a factor of $M_1/M_2$. This assumption directly relates to the parameter $M$ in the feasible set $\mathcal{B}_{b}$ defined in \eqref{eq:Optim}, with the ratio $M_1/M_2$ effectively placing an upper bound on the allowable value of $M$. Furthermore, based on  (A\ref{Assumption:LeastEig}) used in Theorem~\ref{Thm:psd}, we establish a condition on $\sigma^2$, ensuring that the estimated covariance matrix $\hat{\Cov}$ remains positive definite. While we have already estimated $\lambda_{\min}(\hat{\Cov}) = \hat{\sigma}^{2}$, solving the optimization problem \eqref{eq:primal} may result in $\hat{\Cov}$ becoming indefinite.

\begin{theorem}
\label{Thm:SigmaPositive}
Assume that (A\ref{Assumption:Tail}), (A\ref{Assumption:Growth}), and (A\ref{Assumption:BandedSNR}) hold. Suppose that the minimum eigenvalue of the clutter-plus-noise covariance matrix $\Cov$ with bandsize $L$ and dimension $\matrixdimension$ satisfies  
\[
\lambda_{\min}(\Cov) = \sigma^2 \geq C' L \sqrt{\frac{\log \matrixdimension}{K}},
\]  
where $K$ is the number of samples used to construct the \gls*{scm} $\SampCov$, for some constant $C'>0$. Then the estimated clutter-plus-noise covariance matrix $\hat{\Cov}$ has a minimum eigenvalue satisfying  \(\lambda_{\min}(\hat{\Cov}) \geq \hat{\sigma}^2\) with probability at least $1 - c_1 / \matrixdimension$, for some constant $c_1 > 0$, over the set $\mathcal{A}_x$ defined in Def.~\ref{Def:RandomSet}.
  
\end{theorem}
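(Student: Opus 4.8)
The plan is to reduce the statement to Theorem~\ref{Thm:psd} (that is, \cite{bein2015convex}, Theorem~10) by showing that the noise shift $\hat\sigma^2\I$ in the data-fidelity term of \eqref{eq:primal} passes unchanged through the minimizer. First I would record the key structural fact that the group-lasso penalty $\|\cdot\|_{2,1}^*$ in \eqref{eq:penalty} never involves the main diagonal: the index sets $s_m=\{(j,k):|j-k|=\matrixdimension-m\}$ for $1\le m\le\matrixdimension-1$ all satisfy $|j-k|\ge 1$, so for every Hermitian $\mathbf{M}$ and every scalar $t$ one has $\|\mathbf{M}+t\I\|_{2,1}^*=\|\mathbf{M}\|_{2,1}^*$. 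Substituting $\Cov=\banded+\hat\sigma^2\I$ into \eqref{eq:primal} — a bijection on Hermitian matrices, and \eqref{eq:primal} carries no further constraint — the objective becomes $\half\|\banded-\SampCov\|_F^2+\mu\|\banded\|_{2,1}^*$, which is precisely the banded estimator of \cite{bein2015convex} fed with the SCM $\SampCov$. Hence $\hat\Cov=\hat\banded+\hat\sigma^2\I$, where $\hat\banded:=\argmin_{\banded}\big\{\half\|\banded-\SampCov\|_F^2+\mu\|\banded\|_{2,1}^*\big\}$; note this identity is purely algebraic and holds for any $\hat\sigma^2\ge 0$, so the quality of the noise estimate is irrelevant here.

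Next I would verify that Theorem~\ref{Thm:psd} applies to $\hat\banded$ with underlying covariance $\Cov^{*}=\banded^{*}+\sigma^2\I$, where $\banded^{*}$ is the true banded clutter matrix. Assumptions (A\ref{Assumption:Tail}) and (A\ref{Assumption:Growth}) are among the standing hypotheses; (A\ref{Assumption:BandedSNR}) is assumed and depends only on the off-diagonal (banded) part of $\Cov^{*}$, which coincides with that of $\banded^{*}$ and is therefore inherited; and the one remaining hypothesis of Theorem~\ref{Thm:psd}, namely (A\ref{Assumption:LeastEig}) requiring $\lambda_{\min}(\Cov^{*})\ge 2C'L\sqrt{\log\matrixdimension/K}$, is supplied — up to relabeling the absolute constant $C'$ — by the standing assumption $\lambda_{\min}(\Cov)=\sigma^2\ge C'L\sqrt{\log\matrixdimension/K}$. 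The regularization parameter $\mu=2x\sqrt{\log\matrixdimension/K}$ is exactly the one used in Algorithm~\ref{alg:hat}. Theorem~\ref{Thm:psd} then yields, on the set $\mathcal{A}_x$ of Def.~\ref{Def:RandomSet} and with probability at least $1-c_1/\matrixdimension$, the bound $\lambda_{\min}(\hat\banded)\ge C'L\sqrt{\log\matrixdimension/K}>0$.

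To conclude, since $\hat\sigma^2\I$ is a scalar multiple of the identity, the eigenvalues of $\hat\Cov=\hat\banded+\hat\sigma^2\I$ are exactly those of $\hat\banded$ shifted by $\hat\sigma^2$, so on the same event $\lambda_{\min}(\hat\Cov)=\lambda_{\min}(\hat\banded)+\hat\sigma^2\ge \hat\sigma^2$, which is the claim. The main obstacle is the reduction in the first step: one must check carefully that \eqref{eq:primal} is unconstrained over Hermitian matrices (so that $\Cov\mapsto\Cov-\hat\sigma^2\I$ is a bijection of the feasible set) and that the penalty is genuinely shift-invariant, i.e.\ that the index sets $s_m$ in \eqref{eq:penalty}--\eqref{eq:BandedSparse} exclude the diagonal; everything after that is bookkeeping — matching the four assumptions of Theorem~\ref{Thm:psd} to the clutter-plus-noise matrix and applying Weyl's identity for a scalar shift. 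A minor point to flag in the write-up is the factor of two between the hypothesis on $\sigma^2$ and condition (A\ref{Assumption:LeastEig}), which should either be absorbed into $C'$ or the hypothesis strengthened to $\sigma^2\ge 2C'L\sqrt{\log\matrixdimension/K}$.
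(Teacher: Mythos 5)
Your proof is correct, but it takes a genuinely different route from the paper's. The paper does not invoke Theorem~\ref{Thm:psd} as a black box; instead it replays the Rayleigh--quotient step from that theorem's proof: writing $\hat{\mathbf{B}}=\hat{\Cov}-\hat{\sigma}^2\I$ and taking $\mathbf{u}$ to be a unit eigenvector attaining $\lambda_{\min}(\hat{\mathbf{B}})$, it bounds $\lambda_{\min}(\hat{\mathbf{B}})=\mathbf{u}^{H}\Cov\mathbf{u}-\mathbf{u}^{H}(\Cov-\hat{\mathbf{B}})\mathbf{u}\geq \sigma^2-\|\Cov-\hat{\mathbf{B}}\|_{\mathrm{op}}$, and then controls the operator-norm error by the consistency result \cite[Theorem~9]{bein2015convex}, so the hypothesis $\sigma^2\geq C'L\sqrt{\log\matrixdimension/K}$ makes the right-hand side nonnegative. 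You instead prove the exact shift-equivariance $\hat{\Cov}=\hat{\mathbf{B}}+\hat{\sigma}^2\I$ (valid because the index sets $s_m$ in \eqref{eq:BandedSparse} satisfy $|j-k|\geq 1$, so the penalty ignores the diagonal and \eqref{eq:primal} is unconstrained) and then cite the conclusion of Theorem~\ref{Thm:psd} for $\hat{\mathbf{B}}$. Your route buys two things: it makes explicit the shift-equivariance that the paper's proof uses only implicitly (applying \cite[Theorem~9]{bein2015convex} to $\|\Cov-\hat{\mathbf{B}}\|_{\mathrm{op}}$ is only legitimate because $\hat{\mathbf{B}}$ coincides with the plain banded estimator computed from $\SampCov$), and it yields the slightly stronger conclusion $\lambda_{\min}(\hat{\Cov})\geq \hat{\sigma}^2+C'L\sqrt{\log\matrixdimension/K}$. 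The paper's route, on the other hand, sidesteps the factor-of-two mismatch with (A\ref{Assumption:LeastEig}) that you correctly flag, since it only needs $\sigma^2$ to dominate the operator-norm error rather than the full hypothesis of Theorem~\ref{Thm:psd}; in either argument the discrepancy is harmlessly absorbed into the unspecified constant $C'$.
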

\noindent The proof is in the Appendix.\hfill \(\blacksquare\)

\noindent Theorem~\ref{Thm:psd} and Theorem~\ref{Thm:SigmaPositive} require some estimate of the the bandsize $\hat{L}$ which is stated in \cite[Theorem\,4]{bein2015convex}.
\begin{theorem}[\hspace{-0.2mm}\cite{bein2015convex},\,Theorem\,4]
\label{Thm:BandRecover}
Assume that (A\ref{Assumption:Tail}), (A\ref{Assumption:Growth}) and (A\ref{Assumption:BandedSNR}) hold. Then for a given $x>0$ such that the regularization parameter in \eqref{eq:primal}, $\mu\geq x\sqrt{\log\matrixdimension/K}$, then $\hat{L}=L$ on the set  $\mathcal{A}_{x}$ defined in Def.~\ref{Def:RandomSet} with probability at least $1-c_1/p$, for some constant $c_1>0$.
    
\end{theorem}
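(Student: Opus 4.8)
The plan is to establish two separate facts whose conjunction gives the claim: first, that the random event $\mathcal{A}_x$ of Def.~\ref{Def:RandomSet} occurs with probability at least $1-c_1/\matrixdimension$; and second, that \emph{on} $\mathcal{A}_x$ the identity $\hat L = L$ holds deterministically. A useful preliminary reduction is that the proposed estimator \eqref{eq:primal} differs from the one in \cite{bein2015convex} only through the diagonal shift $\hat\sigma^2\I$, while the penalty groups $s_m$ in \eqref{eq:BandedSparse} index only off-diagonal subdiagonals ($|j-k|=\matrixdimension-m\ge 1$ for $1\le m\le\matrixdimension-1$). Because the main diagonal is left unpenalized, $\hat\Cov$ simply matches the diagonal of $\SampCov+\hat\sigma^2\I$, and the off-diagonal subproblem that governs the band support is identical to the one underlying \cite[Theorem~4]{bein2015convex}. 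Hence the diagonal shift is irrelevant to bandsize recovery, and it suffices to reconstruct that argument.

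For the probability bound I would invoke the marginal sub-Gaussian assumption (A\ref{Assumption:Tail}) together with the growth condition (A\ref{Assumption:Growth}). Each off-diagonal entry of $\SampCov$ is an average of products of two sub-Gaussian coordinates, hence sub-exponential, so a Bernstein (Hanson--Wright-type) estimate gives $\Pr\bigl(|\SampCov_{ij}-\Cov^*_{ij}|> x\sqrt{\log\matrixdimension/K}\bigr)\le 2\,\matrixdimension^{-c}$ once $x$ is taken large relative to the sub-Gaussian constant $C$ and $S_0$. A union bound over the $\mathcal{O}(\matrixdimension^2)$ entries, with $\log\matrixdimension\le\kappa_1 K$ from (A\ref{Assumption:Growth}) ensuring the deviation scale stays in the valid regime, yields $\Pr(\mathcal{A}_x)\ge 1-c_1/\matrixdimension$.

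For deterministic recovery on $\mathcal{A}_x$ I would use that the quadratic loss in \eqref{eq:primal} is strongly convex, so $\hat\Cov$ is unique and characterized by the subgradient condition $\hat\Cov-(\SampCov+\hat\sigma^2\I)+\mu\,\partial\|\hat\Cov\|_{2,1}^*\ni\mathbf 0$. The recovered bandsize is read off from the outermost nested group retained by the estimator, with $\gl=\cup_{m=1}^\ell s_m$ as in \eqref{eq:groupLasso}. I would then run a primal--dual witness construction: solve the problem restricted to the true band (the first $B(\Cov^*)$ nested groups), and certify that the remaining off-band groups can be held at zero by exhibiting dual variables satisfying the feasibility constraint $\|\Al_{\gl}\|_2\le 1$ from \eqref{eq:dual}. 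The off-band certificate succeeds because, on $\mathcal{A}_x$, every off-band subdiagonal of $\SampCov$ has norm bounded by $\|\Cov^*_{s_m}\|_2+\mathcal{O}\bigl(x\sqrt{\log\matrixdimension/K}\bigr)$, which the choice $\mu\ge x\sqrt{\log\matrixdimension/K}$ drives to the zero solution; conversely the in-band subdiagonals survive because the signal-to-noise condition (A\ref{Assumption:BandedSNR}), namely $\min_{\ell\in B(\Cov^*)}\|\Cov^*_{s_\ell}\|_2/\sqrt{2\ell}\ge c'$, keeps their restricted estimates above the effective threshold $\mu\,w_{\ell m}$. The support boundary therefore lands exactly at the transition index $B(\Cov^*)$ defined by the gap between $M_1$ and $M_2$, so $\hat L$ coincides with $L$.

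The main obstacle I anticipate is the certificate step for the \emph{overlapping, hierarchically nested} groups $\gl$ rather than disjoint ones: the subdifferential of $\|\cdot\|_{2,1}^*$ couples each subdiagonal across all nested groups containing it, so the dual variables $\Al_{\gl}$ cannot be assigned group-by-group in isolation. I would resolve this by exploiting the separability of the \emph{dual} \eqref{eq:dual} established in \cite{bein2015convex}, which recasts the certificate as an ellipsoidal projection controlled by the root of $h_\ell(\nu)$ in \eqref{eq:nu}. Showing that a nonnegative root $\hat\nu_\ell\ge 0$ exists on the in-band indices while the zero solution is forced off-band is the delicate quantitative heart of the argument, and it is precisely there that the weights $w_{\ell m}$ in \eqref{eq:genw} and the entrywise control afforded by $\mathcal{A}_x$ must be balanced against the choice of $\mu$.
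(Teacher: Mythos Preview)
The paper does not prove this statement: Theorem~\ref{Thm:BandRecover} is quoted verbatim from \cite[Theorem~4]{bein2015convex} and no argument is supplied in the body or the appendix. So there is nothing to compare against beyond the citation itself.

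Your reconstruction is nonetheless a faithful sketch of how the cited result is obtained, and your preliminary reduction is the one substantive point specific to this paper: since the nested groups $s_m$ in \eqref{eq:BandedSparse} index only strictly off-diagonal entries, the $\hat\sigma^2\I$ shift in \eqref{eq:primal} alters only the unpenalized diagonal and leaves the off-diagonal KKT system identical to that of \cite{bein2015convex}. That observation is correct and is exactly why the authors can invoke \cite[Theorem~4]{bein2015convex} without modification. The remainder of your plan---Bernstein-type entrywise concentration under (A\ref{Assumption:Tail})--(A\ref{Assumption:Growth}) for $\Pr(\mathcal{A}_x)$, followed by a primal--dual witness on $\mathcal{A}_x$ exploiting the separable dual and the hierarchical weights---matches the structure of the proof in \cite{bein2015convex}, and you have correctly flagged the nested-group subdifferential as the place where the quantitative work with $w_{\ell m}$ and $\mu$ is concentrated.
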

\noindent\textbf{Remark:} Theorems~\ref{Thm:psd}, \ref{Thm:SigmaPositive}, and \ref{Thm:BandRecover} provide only the theoretical foundation for the existence of various properties and parameter recovery under certain constants. However, in practice, these constants are unknown. A practical rule of thumb, as suggested in \cite{bein2015convex}, is to set the regularization parameter as \( \mu \propto \sqrt{\log \matrixdimension / K} \), which has been observed to be effective. In the numerical results of this paper, we choose \( \mu = 3\sqrt{\log \matrixdimension / K} \).

We proposed the joint estimation of the noise $\hat\sigma^2$ and the banded clutter--plus--noise covariance matrix $\hat\Cov$ using techniques from \cite{Jain2023} and \cite{bein2015convex}, respectively. We further laid down the condition that the estimated covariance matrix $\hat{\Cov}$ is positive definite in Theorem~\ref{Thm:SigmaPositive} and the recovery of bandsize $L$ in Theorem~\ref{Thm:BandRecover}. In the next section, we will empirically demonstrate that the normalized \gls*{scnr} is higher for the ``Banded+Spiked" covariance estimation algorithm compared to the state--of--the--art methods for an RFView simulated environment when the dimension of the matrix is greater than the number of samples ($p>K$).

\begin{algorithm}[t!]
\caption{Estimation of noise power $\hat\sigma^2$ and BCD on dual of Problem \eqref{eq:primal}.}
\emph{Inputs:} $\S,~\mu$, and weight matrices,
$\mathbf{W}^{(\ell)}$. Initialize $\mathbf{\hat{A}}^{(\ell)}=0$ for all $\ell$.\\
$\hat\sigma^2=\frac{\lambda_{\text{med}}}{\zeta_{med}}$, where $\lambda_{\text{med}}$ is the median singular value of $\S$ and $\zeta_{\text{med}}$ is the median of the \gls*{mp} distribution for the ratio $\matrixdimension/K$.\\
For $\ell=1,\ldots, \matrixdimension-1$:
  \begin{itemize}
  \item Compute $\hat{\mathbf{R}}^{(\ell)}\leftarrow
    \S-\mu\sum_{\ell'=1}^{\matrixdimension-1} \mathbf{W}^{(\ell')}\odot\mathbf{\hat A}^{(\ell')}$
  \item For $m\le\ell$, set $\mathbf{\hat A}_{s_m}^{(\ell)}\leftarrow \frac{w_{\ell m}}{\mu(w_{\ell m}^2+\max\{\hat\nu_\ell,0\})}\mathbf{\hat
      R}^{(\ell)}_{s_m}$ where $\hat\nu_\ell$ satisfies
      $h_\ell(\hat\nu_\ell)=0$, as in \eqref{eq:nu}.
  \end{itemize}
The sequence $\{\mathbf{\hat A}^{(\ell)}\}$ is a solution to \eqref{eq:dual}.
The estimated clutter--plus--noise covariance matrix is $\hat{\boldsymbol{\Sigma}}=\SampCov+\hat\sigma^2\I-\mu\sum_{\ell=1}^{\matrixdimension-1} \mathbf{W}^{(\ell)}\odot\mathbf{\hat A}^{(\ell)}$\label{alg:hat}
\end{algorithm}

\section{Numerical Results}
\label{Sec:Num}
To validate our results empirically, we use a representative STAP scenario simulator called RFView. RFView is a high-fidelity, physics based, site-specific, modeling and simulation software $\textnormal{RFView}^{\circledR}$\,\cite{RFView}. RFView uses stochastic transfer function model to simulate a scenario, where the Green's functions impulse response of the clutter and targets is computed and a real time instantiation of the RF environment is simulated. This has been extensively vetted using measured data from VHF to X band with one case documented in \cite{gogineni2022high} demonstrating the match with measured data.

The simulation scenario involves an airborne radar surveying the region over San Diego, CA, USA. The parameters for this scenario are provided in Table\,\ref{Table:Parameters} with \gls*{icm} enabled. The data-cube consists of $N_c=16$ channels, $N_s=64$ pulses and $N_r=5670$ range bins. We fix the range bin at $k_r=N_r/2$ and set the noise power to $\sigma^2=-136.48$ dB. We concatenate $q=4$ channels so the matrix dimension is $p=256$. Sample covariance matrices are generated for $K=\{128,\,161,\,203,\,256,\,322,\,406,\,512\}$ samples, with each configuration evaluated over 100 Monte Carlo simulations. To approximate the true clutter--plus--noise covariance matrix, $\Cov$, we compute a \gls*{scm} using $K=20\matrixdimension$ samples. This large sample size provides a reliable estimate of $\Cov$ under the stationary conditions of the scenario. We use the normalized \gls*{scnr} as a performance metric, which is defined as follows:
\begin{equation*}
    \text{Normalized \gls*{scnr}}=\frac{(\mathbf{y}^{H}{\hat{\Cov}}^{-1}\mathbf{y})^2}{(\mathbf{y}^{H}\Cov^{-1}\mathbf{y})(\mathbf{y}^{H}{\hat{\Cov}}^{-1}\Cov{\hat{\Cov}}^{-1}\mathbf{y})},
\end{equation*}
where $\Cov$ is the true covariance matrix, $\hat{\Cov}$ is the estimated covariance matrix and $\mathbf{y}$ is the Doppler-Azimuth steering vector. The normalized Doppler is evaluated from $[-0.5,0.5]$ with an interval of $0.05$ units and the azimuth is evaluated from $[-180^\circ,180^\circ]$ with an interval of $18^\circ$.
We estimate $\hat\Cov$ using our proposed ``Banded+Spiked" algorithm defined in the previous section with $\mu=3\sqrt{\log\matrixdimension/K}$, the \gls*{tobasco} algorithm \cite{ollila2021tobasco} with a bandsize of $L=30$ \footnote{Bandsizes were swept from 5 to 50, with 30 yielding the highest \gls*{scnr}, as shown in the results.}, the Spiked covariance estimation with the Stein Shrinkage algorithm \cite{Jain2023}, and the Diagonal Loading method\cite{abramovich1981analysis}. In Fig.\,\ref{fig:RFViewCombined}-(a), we present the average \gls*{scnr} over all Doppler and Azimuth values for each $K$. Figs.\,\ref{fig:RFViewCombined}-(b) and (c) show results for $K=161$, displaying the average \gls*{scnr} over Azimuth and Doppler, respectively. 

%\subsection{Empirically showing Unbiased}

Using the scenario simulated in RFView, we demonstrated that the proposed ``Banded+Spiked" clutter-plus-noise covariance estimation algorithm achieves high \gls*{scnr}. The proposed algorithm outperforms TABASCO, which captures only the tapered structure of the clutter-plus-noise covariance matrix without accounting for the noise floor. The spiked covariance model alone lacks the ability to represent the banded structure effectively. Notice that in the diagonal loading method, a heuristic approach for stabilizing the \gls*{scm} in low-data settings, yields sub-optimal performance compared to the proposed algorithm.
\begin{table}[h!]
\centering
\begin{tabular}{|p{3cm}|p{3cm}|} % Adjust column widths as needed
\hline
\textbf{Parameter} & \textbf{Specification} \\
\hline
\textbf{Radar Configuration} & Monostatic \\
\hline
\textbf{Carrier Frequency} & 10 GHz \\
\hline
\textbf{Bandwidth} & 5 Mhz \\
\hline
\textbf{Pulse Repetition Frequency (PRF)} & 1.1 kHz \\
\hline
\textbf{Number of Pulses} & 64 \\
\hline
\textbf{Platform} & Airborne, 1000 m altitude, 100 m/s speed, heading North \\
\hline
\textbf{Antenna Array} & 48 horizontal elements, 5 vertical elements, $\lambda/2$ spacing \\
\hline
\textbf{Transmitter (Tx) Array} & Single channel \\
\hline
\textbf{Receiver (Rx) Array} & 16 channels, pre-steered to aimpoint \\
\hline
\textbf{Wind Speed} & 14\,km/h\\
\hline
\end{tabular}
\caption{Airborne Radar Parameters}
\label{Table:Parameters}
\end{table}
\begin{figure}[htbp]
    \centering
    % Subfigure 1: Normalized SINR vs. Number of Samples
    \begin{subfigure}{0.5\textwidth}
        \includegraphics[width=\linewidth]{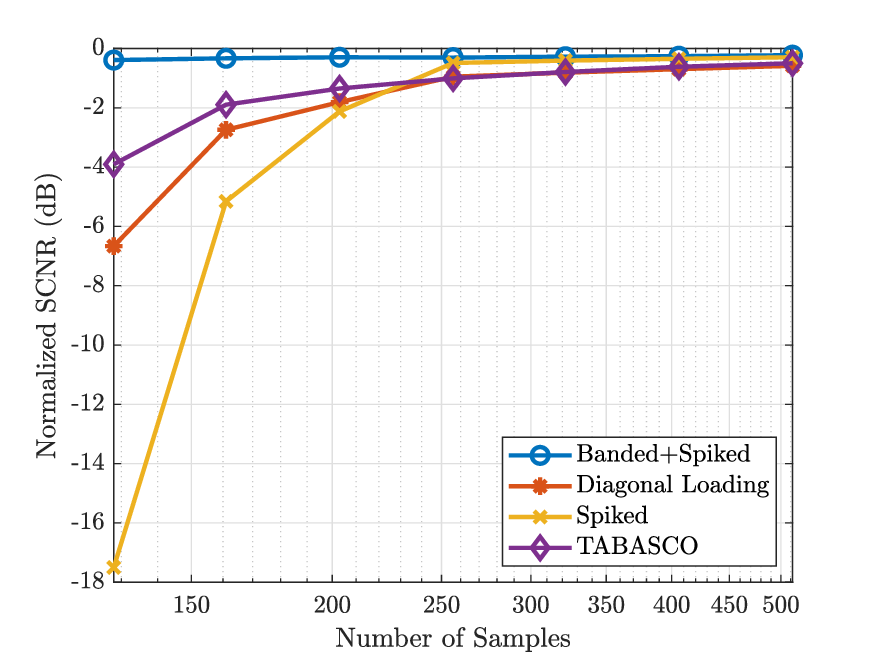}
        \caption{Matrix Dimension $\matrixdimension=256$}
        \label{fig:RFViewSamples}
    \end{subfigure}
    \hfill
    % Subfigure 2: Normalized SINR vs. Doppler
    \begin{subfigure}{0.5\textwidth}
        \includegraphics[width=\linewidth]{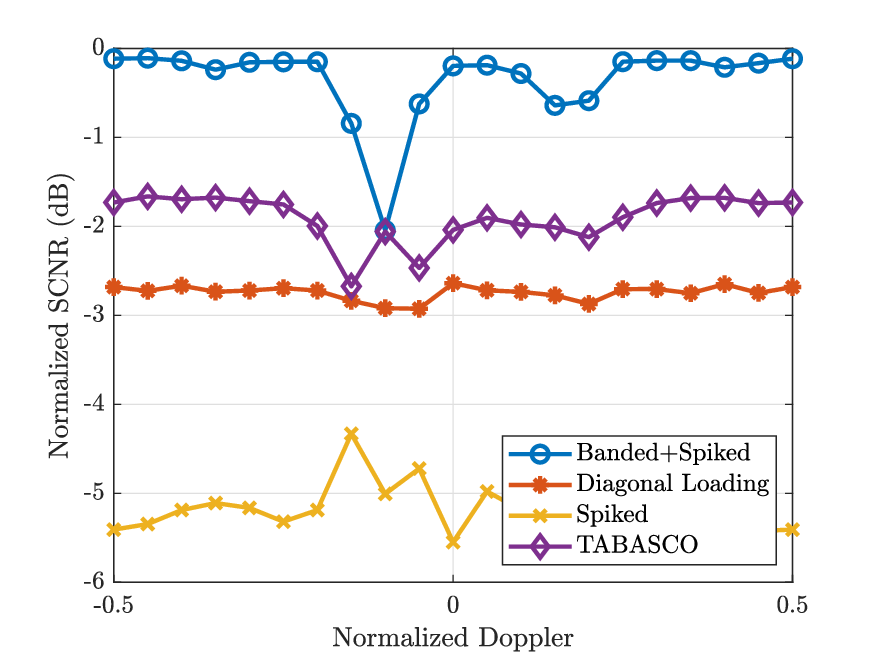}
        \caption{Matrix Dimension $\matrixdimension=256$, Samples $K=161$}
        \label{fig:RFViewDoppler}
    \end{subfigure}
    \hfill
    % Subfigure 3: Normalized SINR vs. Azimuth
    \begin{subfigure}{0.5\textwidth}
        \includegraphics[width=\linewidth]{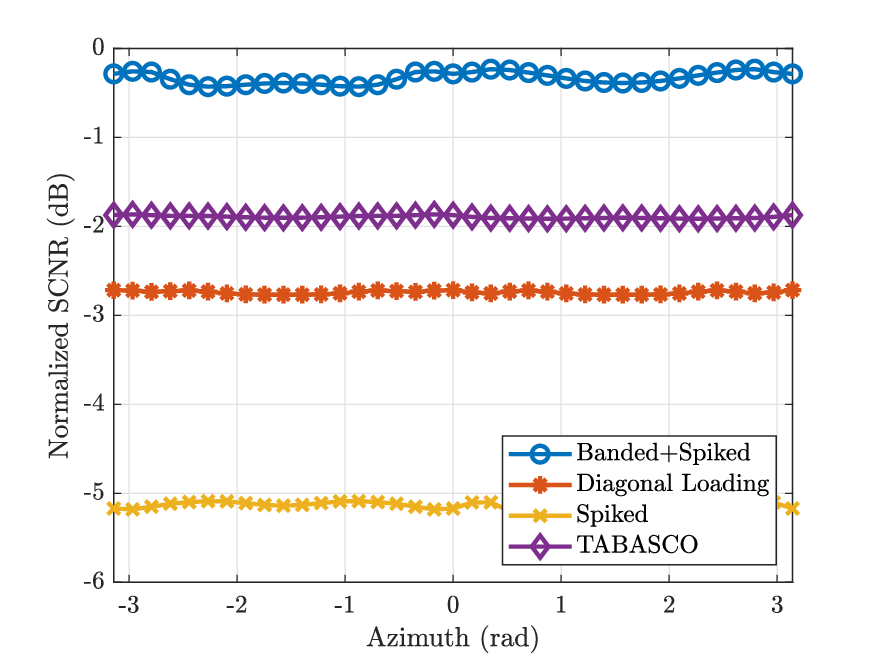}
        \caption{Matrix Dimension $\matrixdimension=256$, Samples $K=161$}
        \label{fig:RFViewAzimuth}
    \end{subfigure}

    \caption{The proposed ``Banded+Spiked" estimation algorithm consistently achieves higher \gls*{scnr} in the low-data regime, where the number of samples is less than the matrix dimension ($K < \matrixdimension$).}
    \label{fig:RFViewCombined}
\end{figure}
\section{Conclusion}
We modeled the clutter--plus--noise covariance matrix using a ``Banded+Spiked" covariance structure. We proposed a convex upper bound, a Frobenius norm minimization with nonsmooth convex constraints, for the \gls*{mle} problem of the ``Banded+Spiked" covariance matrix estimation, for both cases: when $p>K$ and $p\leq K$. We proposed a variational inequality based bound to quantify the closeness of the convex relaxation. Using techniques from our previous work \cite{Jain2023} and building upon algorithm from \cite{bein2015convex}, we developed an algorithm to estimate the ``Banded+Spiked" covariance matrix effectively. We laid the conditions for the positive definiteness of the estimated clutter--plus--noise covariance matrix and the bandsize recovery. Numerical results using RFView demonstrated that the proposed algorithm achieves higher \gls*{scnr} compared to several state-of-the-art methods. 

In future work, we will explore the target detection performance of the proposed estimator. We plan to investigate a generalized block-banded structure for the clutter--plus--noise covariance matrix in a multi-channel setting. We believe that the block-banded structure can be effectively handled by appropriately modifying the weighting matrix \( \mathbf{W} \) defined in Sec.\,\ref{Sec:Algo}-\ref{Sec:NoiseEstPlusBanded}\ref{Sec:BandedEst}. Additionally, we plan to develop a tighter convex relaxation that is globally Lipschitz with respect to the variable. In this context, variational-based bounds will serve as a powerful analytical tool.\\

\noindent\textbf{Acknowledgement:} We express our gratitude to Professor Marten Wegkamp, in the Department of Statistics and Data Science at the Cornell University, for useful discussion regarding the banded covariance estimation which guided the development of the paper.
\appendix
\section*{Proof for Upper Bound (Theorem\,\ref{Thm:UpperBound})}
\noindent \textbf{Case 1: $\matrixdimension\leq K$}
\begin{align*}
    &\tr(\Cov^{-1}\SampCov)-\log\det(\Cov^{-1}\SampCov)-\matrixdimension=\tr(\I+\Cov^{-1}\Delta)\\
    &-\log\det(\I+\Cov^{-1}\Delta)\approx \matrixdimension+\tr(\Cov^{-1}\Delta)\\&-(\tr(\Cov^{-1}\Delta)
    +\frac{1}{2}\tr((\Cov^{-1}\Delta)^2))-\matrixdimension\\
    &=\frac{1}{2}\left\|\Cov^{-1}(\SampCov-\Cov)\right\|_{F}^{2}.
\end{align*}
Using the Frobenius norm inequality,
\begin{align*}
    \frac{1}{2}\left\|\Cov^{-1}(\SampCov-\Cov)\right\|_{F}^{2}\leq \frac{1}{2}\left\|\Cov^{-1}\right\|_2^2\left\|\Cov-\SampCov\right\|_{F}^{2}
\end{align*}
Since, we know that the 2-norm $\left\|\Cov^{-1}\right\|_2^2\approx\sigma^{-4}$,
\begin{equation}
    F(\SampCov; \Cov)\leq \left\|(\Cov-\SampCov)/\sigma^2\right\|_{F}^{2}+\mathrm{const.}.
\end{equation}
\textbf{Case 2: $K<\matrixdimension$}\\
\begin{comment}
In this case, we first define the complex singular Wishart distribution, since $|\SampCov|=0$ so the complex Wishart distribution is not defined. The expression for Complex Singular Wishart Distribution is:
\[f(\SampCov;\Cov)=\frac{\pi^{K(K-\matrixdimension)|\boldsymbol{\Lambda}|^{K-\matrixdimension}}}{\mathcal{C}\Gamma_{K}(K)|\Cov|^{K}}\exp(-K\tr(\Cov^{-1}\SampCov))\]
where $\mathcal{C}\Gamma_{K}(K)$ is the complex multivariate Gamma function and $\boldsymbol{\Lambda}=\diag(\lambda_1,\,\lambda_2,\,\dots,\,\lambda_{K})$, the first $K$ positive singular values of $\SampCov$. In this case, as we maximize the log-likelihood function or minimize the negative Log-Likelihood function.
\end{comment}
Observe that if we work with the following expression:
\begin{equation}
\label{eq:loglikelihoodCSWD}
    \mathcal{L}(\Cov;\SampCov)=\log\det(\Cov)+\tr(\Cov^{-1}\SampCov),
\end{equation}
we can come up with some Frobenius norm-based upper bound. Using the Cauchy-Schwarz inequality:
\begin{align*}
   &\tr(\Cov^{-1}\SampCov)=\tr(\Cov^{-1}(\SampCov-\Cov)+\I)\\
   &=\tr(\Cov^{-1}(\SampCov-\Cov))+\matrixdimension\\
   &\leq\left\|\Cov^{-1}\right\|_{F}\left\|\Cov-\SampCov\right\|_{F}+\matrixdimension\\
   &\leq \sqrt{\matrixdimension}\left\|\Cov^{-1}\right\|_{2}\left\|\Cov-\SampCov\right\|_{F}+\matrixdimension\\
   &\sim\sqrt{\matrixdimension}\left\|\Cov-\SampCov\right\|\sigma^{-2}+\matrixdimension.
\end{align*}
The term $\log\det(\Cov)\leq \matrixdimension\log\lambda_{\max}(\Cov)\leq \matrixdimension\log c$. We bound this term under the assumption that we are in a finite energy, finite power system with finite thermal noise making it independent of the term $\Cov$. Therefore, for case $K<\matrixdimension$, the minimization of Frobenius norm $\|\Cov-\SampCov\|_{F}^2$, is the upper bound for the minimizing the negative log-likelihood function defined in \eqref{eq:loglikelihoodCSWD}.
\section*{Proof of Theorem\,\ref{Thm:EpsilonDerive}}
\begin{enumerate}
    \item For the case $p\leq K$, we have the convex relaxation $\upperbound(\SampCov;\;\Cov)$ in \eqref{eq:UpperBoundHigh} and the non-convex function $F(\SampCov\;;\Cov)$ defined in \eqref{eq:LogLikelihood}. It is trivial that for a given $m=1$, given that $\underset{\Cov}{\mathrm{argmin}}\left\|\Cov-\SampCov\right\|_{F}^{2}=\SampCov$,
    \[\left\|\Cov-\SampCov\right\|_{F}^{2}\geq 0+m\left\|\Cov-\SampCov\right\|_{F}^{2}, \]
    therefore $\gamma=2$.
    Now, from Theorem\,\ref{Thm:UpperBound}, we know that $\upperbound(\SampCov;\;\Cov)\geq F(\SampCov;\;\Cov)$, therefore $\sup_{\Cov}|\upperbound(\SampCov;\;\Cov)-F(\SampCov;\;\Cov)|$ can be derived by computing the upper bound for $\upperbound(\SampCov;\;\Cov)$ and lower bound for $F(\SampCov;\;\Cov)$. 
    
    By triangle inequality, $\left\|\Cov-\SampCov\right\|_{F}^{2}\leq(\left\|\Cov\right\|_{F}+\left\|\SampCov\right\|_{F})^2\leq \matrixdimension(c+\lambda_{\max}(\SampCov))^2$. Therefore,
    \begin{align*}
        \upperbound(\Cov)&=\left\|(\Cov-\SampCov)/\sigma^2\right\|_{F}^2+\const\\
        &\leq \frac{\matrixdimension}{\sigma^4}(c+\lambda_{\max}(\SampCov))^2+\const.
    \end{align*}
    Given that $\lambda_{\max}(\Cov)\leq c$, therefore $\tr(\Cov^{-1}\SampCov)\geq\frac{\tr(\SampCov)}{c}$. Similarly, we know that $\lambda_{\min}(\Cov)=\sigma^2$, so $\log\det\,\Cov\SampCov^{-1}\geq -\matrixdimension\log\frac{\lambda_{\max}(\SampCov)}{\sigma^{2}}$. Therefore,
    \begin{align*}
        F(\Cov;\SampCov)&=\tr(\Cov^{-1}\SampCov)+\log\det(\Cov\SampCov^{-1})-\matrixdimension+\mathrm{const.,}\\
        &\geq \frac{\tr(\SampCov)}{c}-\matrixdimension\log\frac{\lambda_{\max}(\SampCov)}{\sigma^{2}}-p+\const.
    \end{align*}
    The constant terms will cancel out, so
    \begin{align*}
        &\sup_{\Cov}|\upperbound(\Cov;\;\SampCov)-F(\Cov;\;\SampCov)|\leq \frac{\matrixdimension}{\sigma^4}(c+\lambda_{\max}(\SampCov))^2\\ &+\matrixdimension\log\frac{\lambda_{\max}(\SampCov)}{\sigma^{2}}+p-\frac{\tr(\SampCov)}{c}.
    \end{align*}
    Therefore,
    \[\epsilon=\frac{\matrixdimension}{\sigma^4}(c+\lambda_{\max}(\SampCov))^2+\matrixdimension\log\frac{\lambda_{\max}(\SampCov)}{\sigma^{2}}+p-\frac{\tr(\SampCov)}{c}.\]
    \item For the case $\matrixdimension>K$, it is trivially established that for $m=1$, $\gamma=1$. Using the similar triangle inequality from previous case, we have 
    \begin{align*}
        \upperbound_{S}(\Cov;\;\SampCov)&=\sqrt{\matrixdimension}\left\|(\Cov-\SampCov)/\sigma^2\right\|_{F}+\matrixdimension\nonumber\\&+\matrixdimension\log\,c-\log\det\boldsymbol{\Lambda}+\const\,\\
        &\leq \frac{p}{\sigma^2}(c+\lambda_{\max}(\SampCov))+\matrixdimension\log\,c\\ &-\log\det\boldsymbol{\Lambda}+\const\,.
    \end{align*}
    \begin{align*}
F_{\mathrm{S}}(\Cov;\SampCov)&=\tr(\Cov^{-1}\SampCov)+\log\det(\Cov)\\
&-\log\det(\boldsymbol{\Lambda})+\mathrm{const.}\\
&\geq \frac{\tr(\SampCov)}{c}+\matrixdimension\log\,\sigma^2\\ &-\log\det(\boldsymbol{\Lambda})+\mathrm{const.}
\end{align*}
the constants will cancel out, so,
\begin{align*}
        &\sup_{\Cov}|\upperbound(\Cov;\;\SampCov)-F(\Cov;\;\SampCov)|\leq \frac{\matrixdimension}{\sigma^2}(c+\lambda_{\max}(\SampCov))\\ &+\matrixdimension\log\frac{c}{\sigma^{2}}+\matrixdimension-\frac{\tr(\SampCov)}{c}.
    \end{align*}
    Therefore,
    \[\epsilon=\frac{\matrixdimension}{\sigma^2}(c+\lambda_{\max}(\SampCov))+\matrixdimension\log\frac{c}{\sigma^{2}}+\matrixdimension-\frac{\tr(\SampCov)}{c}.\]
\end{enumerate}
\section*{Proof of Theorem\,\ref{Thm:SigmaPositive}}
\noindent Let $\mathbf{u}$ be the eigenvector such that $\mathbf{u}^{H}\hatP\mathbf{u}=\lambda_{\min}(\hatP)$, where $\hatP=\hat\Cov-\hat\sigma^2\I$. From the proof of Theorem~\ref{Thm:psd} (\cite[Theorem\,10]{bein2015convex}),
\[\lambda_{\min}(\hatP)=\mathbf{u}^{H}\Cov\mathbf{u}-\mathbf{u}^{H}(\Cov-\hatP)\mathbf{u}\]
\[\geq \sigma^2-\|\Cov-\hatP\|_{\mathrm{op}},\]
where $\|\cdot\|_{\mathrm{op}}$ is the operator norm. By \cite[Theorem~9]{bein2015convex}, $\|\Cov-\hatP\|_{\mathrm{op}}\leq C'K\sqrt{\log\matrixdimension/K}$, on the set $\mathcal{A}_x$ w.p. $1-c_1/\matrixdimension$. Therefore, if $\sigma^2\geq C'K\sqrt{\log\matrixdimension/K}$ for some positive constant $C'>0$, then $\lambda_{\min}(\hat\Cov)\geq \hat\sigma^2$ w.p. $1-c_1/\matrixdimension$ on the set $\mathcal{A}_x$.

\bibliographystyle{ieeetr}
\bibliography{references.bib}
\end{document}